\documentclass[pra,aps,twocolumn,nofootinbib,superscriptaddress,longbibliography,10pt]{revtex4-2}

\usepackage{algorithm2e}
\usepackage{amsfonts}
\usepackage{amsmath}
\usepackage{amssymb}
\usepackage{mathtools}
\usepackage[colorinlistoftodos]{todonotes}
\usepackage{array}
\usepackage{caption}
\captionsetup{justification=raggedright,singlelinecheck=false}

\usepackage{subcaption}
\usepackage{graphicx}
\usepackage[ colorlinks = true, linkcolor = blue, urlcolor=blue, citecolor = red, anchorcolor = green,
]{hyperref}
\setcounter{MaxMatrixCols}{30}
%TCIDATA{OutputFilter=latex2.dll}
%TCIDATA{Version=5.50.0.2953}
%TCIDATA{CSTFile=revtex4.cst}
%TCIDATA{Created=Friday, April 16, 2021 06:29:11}
%TCIDATA{LastRevised=Sunday, April 25, 2021 19:11:36}
%TCIDATA{<META NAME="GraphicsSave" CONTENT="32">}
%TCIDATA{<META NAME="SaveForMode" CONTENT="1">}
%TCIDATA{BibliographyScheme=Manual}
%TCIDATA{<META NAME="DocumentShell" CONTENT="Articles\SW\REVTeX 4">}
%BeginMSIPreambleData
\providecommand{\U}[1]{\protect\rule{.1in}{.1in}}
%EndMSIPreambleData
\newtheorem{theorem}{Theorem}

\newtheorem{Algorithm}{Algorithm}

\newtheorem{definition}{Definition}

\newtheorem{lemma}[theorem]{Lemma}

\newtheorem{remark}{Remark}

\newenvironment{proof}[1][Proof]{\noindent\textbf{#1.} }{\ \rule{0.5em}{0.5em}}
\numberwithin{theorem}{section}
\numberwithin{proposition}{section}
\numberwithin{definition}{section}
\numberwithin{example}{section}
\usepackage{blindtext}

\newcommand{\bra}[1]{\langle#1|}
\newcommand{\ket}[1]{|#1\rangle}

\newcommand{\outerproj}[1]{\vert #1 \rangle\!\langle #1 \vert}

\newcommand{\dopr}[2]{\vec{#1} \cdot \vec{#2}}

\begin{document}

\title[]{Logarithmic-Depth Quantum Circuits for Hamming Weight Projections}
\author{Soorya Rethinasamy}
\affiliation{School of Applied and Engineering Physics, Cornell University, Ithaca, New
York 14850, USA}
\author{Margarite L. LaBorde}
\affiliation{Hearne Institute for Theoretical Physics and Department of Physics and Astronomy, Louisiana State University, Baton Rouge, Louisiana 70803, USA}
\author{Mark M. Wilde}
\affiliation{School of Electrical and Computer Engineering, Cornell University, Ithaca, New
York 14850, USA}

\begin{abstract}

A pure state of fixed Hamming weight is a superposition of computational basis states such that each bitstring in the superposition has the same number of ones. Given a Hilbert space of the form~$\mathcal{H} = (\mathbb{C}_2)^{\otimes n}$, or an $n$-qubit system, the identity operator can be decomposed as a sum of projectors onto subspaces of fixed Hamming weight. In this work, we propose several quantum algorithms that realize a coherent Hamming weight projective measurement on an input pure state, meaning that the post-measurement state of the algorithm is the projection of the input state onto the corresponding subspace of fixed Hamming weight. We analyze a depth-width trade-off for the corresponding quantum circuits, allowing for a depth reduction of the circuits at the cost of more control qubits. For an $n$-qubit input, the depth-optimal algorithm uses $\mathcal{O}(n )$ control qubits and the corresponding circuit has depth $\mathcal{O}(\log (n))$, assuming that we have the ability to perform qubit resets. Furthermore, the proposed algorithm construction uses only one- and two-qubit gates.
\end{abstract}

\date{\today}
\startpage{1}
\endpage{10}
\maketitle

\tableofcontents

\section{Introduction}

Quantum computing is a computational paradigm that, for particular computational tasks, can potentially provide speedups to the best-known classical algorithms. Multiple classes of problems have now been shown to have quantum algorithms solving them faster than the current fastest classical algorithms \cite{dalzell2023quantumalgorithmssurveyapplications}. To leverage the power of quantum computing, several key properties, like entanglement and superposition, need to be manipulated effectively. %Entanglement is useful for several well-known tasks such as quantum teleportation and quantum communication. 

%\sr{Why is Hamming weight useful? How is it related to entanglement?}
%\mll{Isn't Hamming weight integral to a bunch of classical operations such as ADD or OR? I want to say I read that Hamming Weight of a control bit is used in some QML approaches in Schuld \& Petrucchione. See also arxiv:2309.15547v1 for refs}

The Hamming distance of two binary strings is the number of substitutions needed to convert one string to the other. A special case of the Hamming distance is the Hamming weight, which is the Hamming distance of a string from the all-zeros string. The Hamming weight is more simply defined as the number of `1' bits in the string, or, equivalently, the $\ell_1$-norm of a binary vector~\cite{cusick2017cryptographic}. The Hamming weight of a string is a relevant quantity in multiple different fields such as cryptography~\cite{Kangquan2022generalized},  error correction~\cite{xu2023subfield}, and information theory~\cite{CTbook}. %and quantum machine learning tasks~\cite{monbroussou2023trainability}. 
Many classical algorithms determine Hamming weight to facilitate applications such as internet-application lookup protocols~\cite{Stoica2003Chord} or determining the cryptographic performance of linear codes~\cite{wei1991generalized}. An entire suite of functions, called symmetric Boolean functions, rely solely on the Hamming weight of the input, encompassing foundational logical tasks such as AND, OR, Majority, and Parity~\cite{Wegener1987}.  Indeed, since classical computation primarily uses binary operations to perform calculations, the Hamming weight arises naturally in a variety of situations. 

A quantum generalization of the Hamming weight is straightforwardly defined on computational basis strings; however, in the case of superpositions, the Hamming weight of the overall state is not well defined. In such a scenario, a Hamming weight measurement can have different outcomes depending on the specific superposition. For example, a state of the form 
\begin{equation}
    \ket{\psi} = \frac{1}{2}\left( \ket{000} + \ket{001} + \ket{010} + \ket{111}\right),
\end{equation}
has a probability of $1/4$ to be observed in the Hamming subspace $0$, a probability of $1/2$ for the Hamming subspace $1$, and a probability of $1/4$ for the Hamming subspace $3$.

The main contribution of our paper is a fast quantum algorithm for performing a coherent Hamming weight measurement, as a strongly related sequel to our Hamming-weight symmetry tests from~\cite[Sections~6.4.1 and 6.4.2]{LaBorde2023testingsymmetry}.\footnote{Specifically, the following was noted after~\cite[Eq.~(189)]{LaBorde2023testingsymmetry}: ``As an aside, we note that a generalization
of our method allows for performing a projection onto
constant-Hamming-weight subspaces, which is useful in
tasks like entanglement concentration.''} ``Coherent'' in this context means that, not only is the classical measurement outcome equal to the Hamming weight, but it is also required for coherences in the same Hamming weight subspace to be preserved.
In more detail, the aim of the coherent quantum Hamming weight measurement is to realize the following quantum channel on an $n$-qubit state $\rho$:
\begin{equation}
    \rho \mapsto \sum_x |x\rangle\!\langle x| \otimes \Pi_x \rho \Pi_x,
\end{equation}
where $x$ is the classical measurement outcome, equal to the Hamming weight, and $\Pi_x$ is the projection onto the subspace of $\mathbb{C}_2^{\otimes n}$ with Hamming weight $x$. See \eqref{eq:Px_def} for an explicit example. The probability to observe the outcome~$x$ is
\begin{equation}
p(x) = \operatorname{Tr}[\Pi_x \rho],
\end{equation}
and the post-measurement state, conditioned on this outcome, is given by 
\begin{equation}
\frac{\Pi_x \rho \Pi_x }{p(x)}.
\label{eq:post-meas-state-Hamm}
\end{equation}
% \begin{equation}
%     \ket{\psi}_S \rightarrow \sum_{x} \ket{x}_C \otimes P_x \ket{\psi}_S.
% \end{equation}
% Measuring the  register $C$ gives outcome $x$ with probability 
% \begin{equation}
%     p(x) = \left \Vert P_x \ket{\psi} \right \Vert_2^2
% \end{equation}
% and the post-measurement state on register $S$ is given by
% \begin{equation}
%     \ket{\psi_x} = \frac{P_x \ket{\psi}}{\sqrt{p(x)}}. 
% \end{equation}

%\sr{Think about including the fully coherent version}
The essential aspect of the coherent Hamming weight measurement is the post-measurement state. Contrary to what our algorithm accomplishes, an incoherent Hamming weight measurement, which has outcome probability distribution $(p(x))_x$ without concern for the post-measurement state, can be realized by simply measuring all qubits in the computational basis and calculating the Hamming weight classically by means of known fast classical algorithms~\cite{warren2012hacker}. The coherent version that leads to the post-measurement state in \eqref{eq:post-meas-state-Hamm} allows our protocol to be used as a subroutine in other quantum information-processing tasks. 

Prior work on the coherent Hamming weight measurement includes~\cite{KM2001}, which provides a linear-depth quantum circuit for this task. Interestingly, this circuit was considered experimentally in an ion-trap setup in~\cite{SLLSH16}. Another work considered a coherent Hamming weight measurement as a collective magnetization measurement~\cite{MCE18,Mirkamali2019}, but there is no discussion of the circuit depth. In contrast to~\cite{KM2001}, our circuit has logarithmic depth and requires a linear number of control qubits, assuming that qubit resets are allowed. 

One of the key components of our algorithm, that allows for the reduction in depth, is the quantum fan-out operation, implemented in constant quantum-depth using mid-circuit measurements. The usage of the quantum fan-out is known to help in various other algorithms~\cite{HS05}, including factoring using Shor’s algorithm~\cite{2D_Factoring}. Another component of our algorithm involves using controlled rotation gates to calculate Hamming weights coherently, which bears similarities to the earlier work of~\cite{draper2000}.

The ability to perform a coherent Hamming weight measurement has various applications in quantum information science. Realizing a coherent Hamming weight measurement efficiently is useful in information processing tasks like entanglement concentration~\cite{BBPS96} (see also~\cite[Chapter~19]{Wilde_2017}), multiparty entanglement concentration~\cite{SW22}, coherence concentration~\cite{PhysRevLett.116.120404}, multicasting of pure quantum states~\cite{SHIH2010273}, and compression of pure states~\cite{PhysRevA.81.032317}. 
Another application of the coherent Hamming weight measurement is in error detection and correction. In the natural setting of independent and identically distributed (i.i.d.)~amplitude damping noise, a coherent Hamming weight measurement can be used to detect which error has occurred when using the constant-excitation codes of~\cite{OUY21}. For error correction of i.i.d.~amplitude damping noise, when using such codes, it is possible to perform the correction step based on the constant-excitation space projected into, by employing a coherent Hamming weight measurement.

In this work, we present an algorithm that uses $  n$ control qubits and has $\mathcal{O}(\log (n))$ depth to realize the coherent Hamming weight measurement on a quantum state of $n$~qubits. To keep the number of control qubits to be~$n$, we assume the ability to perform mid-circuit measurements and qubit resets as the algorithm proceeds. Dynamic circuits on IBM hardware were used in~\cite{PhysRevLett.127.100501} and compiling quantum circuits to use qubit resets was employed in~\cite{PhysRevX.13.041057}.

Figure~\ref{fig:full-circuit-resets} depicts the corresponding quantum circuit.
% While much literature considers Hamming weight modulo some integer \mmw{which literature?}, we make no such assumptions here.
 We also describe a trade-off between circuit depth and number of control qubits. The other end of the spectrum is an algorithm that uses $\mathcal{O}(\log(n))$ control qubits and has depth $\mathcal{O}(n\log (n))$.

\medskip 
\textit{Note on independent work}---While finalizing our paper, we came across the concurrent and independent works~\cite{piroli2024approximating,zi2024shallow}, which also provide shallow-depth circuits for computing Hamming weight coherently.

%\mmw{need to incorporate literature from arXiv:2404.06052 that seems relevant for our lit. review} 
%\sr{The lit from this paper is a bit weird, like we discussed. I think we have enough, and we can go ahead.}

\section{Preliminaries}
\label{sec:Prelim}

In this section, we present some preliminary aspects of the algorithm. Let $\ket{\psi}$ be an $n$-qubit input state. For an $n$-qubit quantum state, there are $(n+1)$ possible Hamming weight outcomes: $\{0, 1, \ldots, n\}$. Define the following:
\begin{align}
    k & \coloneqq \lceil \log _2(n+1) \rceil \label{eq:def_K} , \\
    w & \coloneqq 2^k - (n + 1) \label{eq:def_w}.     
\end{align}
Intuitively, if we were to store the $(n+1)$ possible Hamming weight values, we would need at least~$k$ bits. However, we find it convenient for this initial discussion to pad to the next power of two for computational purposes; thus, the number of extra outcomes with~$k$ bits is given by $w$. Later, we eliminate the need for this extra padding. In the case where $(n+1)$ is a power of two, $w=0$. We now define the zero-padded state
\begin{equation}
    \ket{\phi} \coloneqq \ket{\psi}\ket{0}^{\otimes w}.
    \label{eq:zero-padded-state}
\end{equation}
Thus, $\ket{\phi}$ has the same Hamming weight as $\ket{\psi}$ and is on a register of $N \coloneqq 2^k-1$ qubits.

Next, let $P_x$ denote the projector onto the space of fixed Hamming weight $x$. For example, consider a two-qubit state $\ket{\psi}$. Then, we append $w=1$ qubits in the $\ket{0}$ state and set $N=3$. The Hamming weight projectors for $x \in \{0, 1, \ldots, N\}$ are as follows:
\begin{align}
\label{eq:Px_def}
    P_0 &= \outerproj{000} \nonumber ,\\
    P_1 &= \outerproj{001} + \outerproj{010} + \outerproj{100} \nonumber ,\\
    P_2 &= \outerproj{011} + \outerproj{101} + \outerproj{110} \nonumber, \\
    P_3 &= \outerproj{111}.
\end{align}
The Hamming weight projective measurement is thus defined by the measurement operator set $\{P_x\}_x$ for $x \in \{0, \ldots, N\}$.

To realize this measurement, we use a construction based on the idea of a pinching channel. See~\cite[Section~2.6.3]{tomamichel2015quantum} for a review of this concept, as well as the construction mentioned below. A pinching map is a channel of the form 
\begin{equation}
    P \colon L \to \sum_x P_x L P_x
\end{equation}
where $\{P_x\}_x$ for $x \in \{0, \ldots, N \}$ are orthogonal projectors that sum up to the identity. Importantly, a pinching map has an alternative representation as a random unitary channel:
\begin{equation}
    \sum\limits_{x=0}^N P_x L P_x = \frac{1}{N+1} \sum\limits_{y=0}^N U_y L U^\dagger_y,
\end{equation}
where the set $\{U_y\}_y$ for $y \in \{0, \ldots, N\}$ is defined as
\begin{equation}
    \label{eq:Uy_initial_def}
    U_y \coloneqq \sum\limits_{x=0}^N \exp\!\left(\frac{2 \pi \mathrm{i} y x}{N+1} \right) P_x ,
\end{equation}
and the equality results from basic Fourier analysis. The set $\{U_y\}_y$  is a unitary representation of the cyclic group~$C_{N+1}$. To see this, we first define a cyclic group on $M$ elements:
\begin{equation}
    C_M \coloneqq \langle a \mid a^M = e \rangle,
\end{equation}
where $e$ is the identity element and $a$ is the generator of the group. To show that the set $\{U_y\}_y$ is a unitary representation of $C_{N+1}$, we first note that
\begin{equation}
    U_0 = \sum\limits_{x=0}^N P_x = \mathbb{I}.
\end{equation}
Thus, $U_0$ is the unitary representation of $e$. Next, we show that $U_1^{N+1} = \mathbb{I}$. To see this, we expand
\begin{equation}
    U_1^{N+1} = \left[\sum\limits_{x=0}^N \exp\!\left(\frac{2 \pi \mathrm{i} x}{N+1} \right) P_x \right]^{N+1}.
\end{equation}
Since the set $\{P_i\}_i$ consists of  mutually orthogonal projectors, this evaluates to
\begin{align}
    U_1^{N+1} &= \sum\limits_{x=0}^N \exp(2 \pi \mathrm{i} x) P_x \nonumber \\
    &= \mathbb{I}.
\end{align}
Thus, the set $\{U_y\}_y$ is a unitary representation of the group $C_{N+1}$.

The next order of business is to find a way to implement the set of unitaries $\{U_y\}_y$ on a quantum computer. For the case of Hamming-weight projections, each of the unitaries can be written as a product of $Z$ rotations on individual qubits, as shown in~\cite[Section~6.4.1]{LaBorde2023testingsymmetry}. Here we recall these points briefly. To this end, first consider the representation of the operator $R_z(\phi)$ in the computational basis:
\begin{equation}
    R_z(\phi) = \operatorname{Diag}\left\{\exp\!\left(-\frac{\mathrm{i}\phi}{2} \right), \exp\!\left(\frac{\mathrm{i}\phi}{2} \right)\right\}.
\end{equation}
Similarly, expressing $R_z(\phi)^{\otimes 2}$ in the computational basis gives
\begin{equation}
    R_z(\phi)^{\otimes 2} = \operatorname{Diag}\left\{\exp\!\left(-\mathrm{i}\phi\right), 1, 1, \exp\!\left(\mathrm{i}\phi\right)\right\}.
\end{equation}
The~$k$th entry in the expansion depends on the number of zeros and ones in the binary expansion of the number~$k$. To generalize to the case of $N$ qubits, we observe that the number of zeros in a bit-string $x$ is $N-H(x)$ and the number of ones is $H(x)$, where $H(x)$ is the Hamming weight of $x$. For example, $H(6) = 2$ since $6_{10} \equiv 110_2$. Each zero contributes a phase of $-\phi/2$ for a total of $-(N-H(x))\phi/2$, and each one contributes a phase of $\phi/2$,  for a total of $H(x) \phi / 2$. Then the overall total for the bit-string $x$ is 
\begin{equation}
-(N-H(x))\phi/2 + H(x) \phi / 2 = (2H(x) - N)\phi/2.   
\end{equation}
This implies that 
\begin{align}
    R_z(\phi)^{\otimes N} &= \operatorname{Diag}\left\{ \exp \!\left[ \left( \frac{2H(x)-N}{2} \right) \mathrm{i}\phi \right]_{x=0}^{2^N-1} \right\} \nonumber \\
    &= \sum\limits_{x \in \{0, 1\}^N} \exp \!\left[ \left(\frac{2H(x)-N}{2}\right) \mathrm{i}\phi \right] \outerproj{x} \nonumber \\
    &= \exp\! \left(\frac{-\mathrm{i}N\phi}{2} \right) \sum\limits_{x \in \{0, 1\}^N} \exp \!\left[ \mathrm{i}H(x)\phi \right] \outerproj{x} \notag \\
    &= \exp\! \left(\frac{-\mathrm{i}N\phi}{2} \right) \sum_{x=0}^N \exp \!\left[ \mathrm{i}x\phi \right] P_x,
    \label{eq:Rz-phi-otimes-N}
\end{align}
where the last equality results from grouping together all the projectors of fixed Hamming weight. We now make a particular choice of the angle $\phi$, as follows:
\begin{equation}
    \phi = \frac{2\pi y}{N+1}.
\end{equation}
Thus,
\begin{equation}
    R_z\!\left(\frac{2\pi y}{N+1}\right)^{\otimes N} = \exp\! \left(\frac{-\mathrm{i} \pi y N }{N+1} \right)\\ \sum_{x=0}^N \exp \!\left[\frac{2\pi \mathrm{i} y x}{N+1} \right] P_x.
\end{equation}
Comparing with \eqref{eq:Uy_initial_def}, we see that
\begin{equation}
    \label{eq:Uy_new_def}
    U_y = \exp\! \left(\frac{\mathrm{i} \pi y N }{N+1} \right) R_z\!\left(\frac{2\pi y}{N+1}\right)^{\otimes N}.
\end{equation}
Thus, each unitary $U_y$ is composed of an overall phase and a product of $Z$ rotations. The group structure of $\{U_y\}_y$ is more evident in this form, and the unitaries $U_y$ can be realized on a quantum computer.

We now give some remarks and lemmas that will be useful for the proofs in the next section.

\begin{remark}
\label{rmk:Rz_computational}
The action of $R_z(\phi)$ on a computational basis state can be written as
\begin{equation}
    R_z(\phi)\ket{x} = \exp{\!\left(\frac{\mathrm{i}\phi}{2}(2x-1)\right)} \ket{x},
\end{equation}
for $x \in \{0, 1\}$.
\end{remark}

\begin{remark}
\label{rmk:C-Rz_computational}
The action of a controlled-$R_z(\phi)$ (abbreviated hereafter $CR_z(\phi)$) when the control qubit is in a computational basis state can be written as
\begin{equation}
    CR_z(\phi)_{AB}\ket{x}_A\ket{\psi}_B = \ket{x}_A R_z(x\phi) \ket{\psi},
\end{equation}
for $x \in \{0, 1\}$.
\end{remark}

\begin{lemma}[Reduction of Controlled Rotations]
\label{lem:Rz-cancellation}
Consider the following two-qubit state:
\begin{equation}
    \left[\alpha \ket{0}_C + \beta\ket{1}_C\right] \ket{0}_S.
\end{equation}
The action of a controlled-$R_z$ rotation on a state of this form can be replaced with a corresponding $R_z$ rotation on the control qubit. Indeed, this result can be seen from the following reasoning:
\begin{align}
    &CR_z(\phi)\left[\alpha \ket{0}_C + \beta\ket{1}_C\right] \ket{0}_S \notag \\
    &= \left(\outerproj{0}_C \otimes I_S + \outerproj{1}_C \otimes R_z(\phi)_S\right) \times \notag \\
    &\qquad\qquad\qquad\qquad \left[\alpha \ket{00}_{CS} + \beta\ket{10}_{CS}\right] \\
    &= \alpha \ket{00}_{CS} + R_z(\phi)_S \beta\ket{10}_{CS} \\
    % &= \alpha \ket{00}_{CS} + e^{\frac{-\mathrm{i} \phi}{2}} \beta\ket{10}_{CS} \\
    &= \left[\alpha \ket{0}_C + e^{\frac{-\mathrm{i} \phi}{2}} \beta\ket{1}_C\right] \ket{0}_S \\
    &= e^{\frac{-\mathrm{i} \phi}{4}} \left[e^{\frac{\mathrm{i} \phi}{4}} \alpha \ket{0}_C + e^{\frac{-\mathrm{i} \phi}{4}} \beta\ket{1}_C\right] \ket{0}_S \\
    &= e^{\frac{-\mathrm{i} \phi}{4}} R_z\!\left(\frac{-\phi}{2}\right)_C \left[\alpha \ket{0}_C + \beta\ket{1}_C\right] \ket{0}_S.
    \label{eq:removeC-Rz}
\end{align}
Due to the leftmost phase $e^{\frac{-\mathrm{i} \phi}{4}}$ being a global phase, it has no physically observable consequence: the state of the target qubit is unchanged, and the overall effect is to rotate the control qubit. Thus, a controlled-$R_z(\phi)$ gate acting on a state where the target qubit is in the state $\ket{0}$ can be replaced with $R_z(\frac{-\phi}{2})$ acting on the control qubit. 
\end{lemma}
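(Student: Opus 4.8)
The plan is to verify the claim by a direct computation: expand the controlled gate in its operator-sum form, apply it to the stated product state, and then identify the resulting single-qubit operator on the control wire.

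First I would write $CR_z(\phi)_{CS} = \outerproj{0}_C \otimes I_S + \outerproj{1}_C \otimes R_z(\phi)_S$ and act with it on $[\alpha \ket{0}_C + \beta\ket{1}_C]\ket{0}_S$. The $\ket{0}_C$ branch passes through untouched, while on the $\ket{1}_C$ branch the rotation $R_z(\phi)$ acts on $\ket{0}_S$; by Remark~\ref{rmk:Rz_computational} with $x=0$, this contributes only the scalar $e^{-\mathrm{i}\phi/2}$ and leaves $\ket{0}_S$ intact. Hence the post-gate state is $[\alpha \ket{0}_C + e^{-\mathrm{i}\phi/2}\beta\ket{1}_C]\ket{0}_S$, which already factorizes across the two registers, so the target qubit is genuinely unaffected.

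Next I would recognize the operator induced on the control. Pulling out the global phase $e^{-\mathrm{i}\phi/4}$ symmetrizes the amplitudes into $e^{\mathrm{i}\phi/4}\alpha \ket{0}_C + e^{-\mathrm{i}\phi/4}\beta\ket{1}_C$, which, again by the convention recorded in Remark~\ref{rmk:Rz_computational}, is exactly $R_z(-\phi/2)$ applied to $\alpha \ket{0}_C + \beta\ket{1}_C$. Since a global phase is physically irrelevant, this establishes the asserted replacement of $CR_z(\phi)$ by $R_z(-\phi/2)$ on the control qubit.

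I do not expect a genuine obstacle here; the only care needed is phase bookkeeping — tracking the $e^{-\mathrm{i}\phi/2}$ factor produced by $R_z$ acting on $\ket{0}$ and splitting off precisely the $e^{-\mathrm{i}\phi/4}$ global phase so that what remains is $R_z(-\phi/2)$ and not its inverse or $R_z(-\phi)$. The conceptual point worth stating explicitly is that the reduction hinges entirely on the target qubit being in $\ket{0}$: there $R_z(\phi)$ degenerates to a scalar on register $S$, so the conditional ``rotation'' can be absorbed wholesale into a rotation on the control.
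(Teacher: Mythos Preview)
Your proposal is correct and follows essentially the same approach as the paper: the paper also expands $CR_z(\phi)$ as $\outerproj{0}_C\otimes I_S + \outerproj{1}_C\otimes R_z(\phi)_S$, applies it to the product state, uses $R_z(\phi)\ket{0}=e^{-\mathrm{i}\phi/2}\ket{0}$ to obtain $[\alpha\ket{0}_C+e^{-\mathrm{i}\phi/2}\beta\ket{1}_C]\ket{0}_S$, and then factors out the global phase $e^{-\mathrm{i}\phi/4}$ to identify the remaining action on the control as $R_z(-\phi/2)$. Your explicit invocation of Remark~\ref{rmk:Rz_computational} and the closing conceptual remark are minor additions, but the computation and phase bookkeeping are line-for-line the same.
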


\section{Algorithm Construction}

\label{sec:AlgCons}
In this section, we put forth two algorithms to realize the coherent Hamming weight measurement on an $n$-qubit input state $\ket{\psi}$. Recall the definitions of $k$ and $w$ from \eqref{eq:def_K} and \eqref{eq:def_w}, respectively. 

The first algorithm uses zero padding to have exactly $2^k$ measurement outcomes. This simplifies the analysis and the proof, but it requires more ancilla qubits and rotation gates. The second algorithm eliminates the extra zero-padded ancilla qubits and rotation gates. The analysis of this algorithm builds upon the analysis of the first algorithm.

\subsection{First Algorithm with Zero Padding}

The first algorithm uses the zero-padded state $\ket{\phi}$ of  $N = 2^k - 1$ qubits, as defined in~\eqref{eq:zero-padded-state}, and it is depicted in Figure~\ref{fig:Circuit_Uy}. It has $\mathcal{O}(\log n)$ control qubits.

\begin{Algorithm}
\label{alg:hamming-wt-meas-ineff} 

\begin{figure}
    \includegraphics[width=1.0\columnwidth]{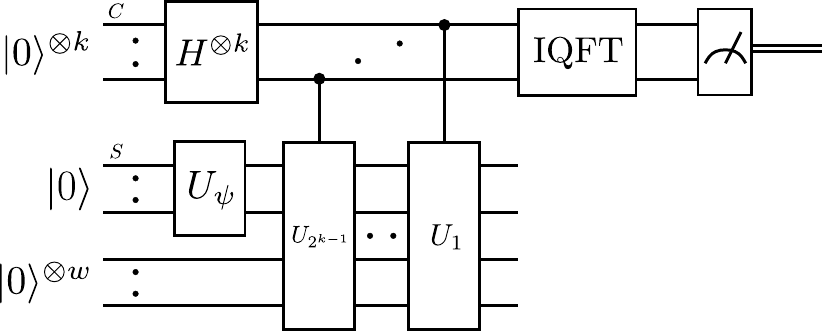}
    \caption{Circuit to realize the quantum Hamming weight of an input state. The unitary $U_{\psi}$ prepares the state~$\ket{\psi}$, on which the coherent Hamming weight measurement is being performed.}
    \label{fig:Circuit_Uy}
\end{figure} 

The algorithm consists of the following steps:
\begin{enumerate}
\item Prepare a control register $C$, consisting of~$k$ qubits, in the state $|0\rangle_{C}$.

\item Act on register $C$ with $k$ Hadamard gates, leading to the state
\begin{align}
    \ket{+}_C &= H^{\otimes k}\ket{0}_C^{\otimes k} \nonumber \\
    &= \frac{1}{\sqrt{N+1}} \sum\limits_{y=0}^N \ket{y}_C.
\end{align}

\item Append the zero-padded state $\ket{\phi} = \ket{\psi}\ket{0}^{\otimes w}$, and perform the following controlled unitary:
\begin{equation}
\sum\limits_{y=0}^N \outerproj{y}_{C} \otimes U_y,
\end{equation}
where $U_y$ is defined in \eqref{eq:Uy_new_def}.

\item Perform an inverse quantum Fourier transform on register~$C$ and measure in the computational  basis $\{\outerproj{0}, \ldots, \outerproj{N} \}$.
\end{enumerate}

\end{Algorithm}

\begin{theorem}
Algorithm~\ref{alg:hamming-wt-meas-ineff} realizes a coherent Hamming-weight measurement $\{P_x\}_{x=0}^{n}$ on the state $\ket{\psi}$.
\end{theorem}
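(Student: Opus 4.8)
The plan is to track the joint state of the control register $C$ and the system register through each of the four steps of Algorithm~\ref{alg:hamming-wt-meas-ineff}, and to show that the combined effect of Steps~3--4 implements the pinching map realized as a random unitary channel, as discussed in the Preliminaries. Writing the input as $\ket{\psi} = \sum_x \ket{\psi_x}$ where $\ket{\psi_x} \coloneqq P_x \ket{\psi}$ is the (unnormalized) component of $\ket{\psi}$ in the Hamming-weight-$x$ subspace, I note that padding with $w$ zeros does not change the Hamming weight, so $\ket{\phi} = \sum_{x=0}^{n} \ket{\phi_x}$ with $\ket{\phi_x} = P_x \ket{\psi}\ket{0}^{\otimes w}$, and the sum effectively runs only over $x \in \{0,\ldots,n\}$ since the padded qubits are $\ket{0}$.

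First I would write the state after Step~2 as $\frac{1}{\sqrt{N+1}}\sum_{y=0}^{N} \ket{y}_C \otimes \ket{\phi}$. Applying the controlled unitary in Step~3 and using \eqref{eq:Uy_initial_def} (or equivalently \eqref{eq:Uy_new_def}) together with $U_y \ket{\phi_x} = \exp(2\pi \mathrm{i} y x /(N+1)) \ket{\phi_x}$, the state becomes
\begin{equation}
\frac{1}{\sqrt{N+1}}\sum_{y=0}^{N} \sum_{x=0}^{n} \exp\!\left(\frac{2\pi \mathrm{i} y x}{N+1}\right) \ket{y}_C \otimes \ket{\phi_x}.
\end{equation}
Next I would apply the inverse quantum Fourier transform on $C$ in Step~4. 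Since $\mathrm{QFT}^{\dagger}\colon \ket{y} \mapsto \frac{1}{\sqrt{N+1}} \sum_{z=0}^{N} \exp(-2\pi \mathrm{i} y z/(N+1)) \ket{z}$, the control register amplitude for outcome $z$ becomes $\frac{1}{N+1}\sum_{y=0}^{N} \exp(2\pi \mathrm{i} y (x-z)/(N+1)) = \delta_{x,z}$ (a geometric-sum / orthogonality-of-characters identity), so the state collapses to $\sum_{x=0}^{n} \ket{x}_C \otimes \ket{\phi_x}$. Measuring $C$ in the computational basis then yields outcome $x$ with probability $\|\ket{\phi_x}\|^2 = \|P_x\ket{\psi}\|^2 = \operatorname{Tr}[P_x \outerproj{\psi}]$, and leaves the system in the normalized post-measurement state $\ket{\phi_x}/\|\ket{\phi_x}\|$; discarding the $w$ padded qubits (which are in $\ket{0}^{\otimes w}$ within each $\ket{\phi_x}$) returns exactly $P_x \ket{\psi}/\|P_x\ket{\psi}\|$. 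This matches \eqref{eq:post-meas-state-Hamm}, so the channel realized is $\outerproj{\psi} \mapsto \sum_x \outerproj{x}_C \otimes P_x \outerproj{\psi} P_x$, which is the claimed coherent Hamming-weight measurement; extending to mixed-state inputs $\rho$ follows by linearity.

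I expect the only real subtlety — rather than a genuine obstacle — to be bookkeeping about the zero padding: one must check that the extra $w = 2^k - (n+1)$ outcomes simply never occur (their amplitudes vanish because $\ket{\phi}$ has no support on Hamming weights exceeding $n$), and that the padded register factors out cleanly so that tracing it out recovers the measurement $\{P_x\}_{x=0}^{n}$ on the original $n$ qubits rather than on $N$ qubits. The character-orthogonality step giving $\delta_{x,z}$ is routine Fourier analysis and is already flagged in the Preliminaries, so I would simply cite it. The identity $U_y \ket{\phi_x} = e^{2\pi\mathrm{i}yx/(N+1)}\ket{\phi_x}$ is immediate from the spectral form \eqref{eq:Uy_initial_def}, so no further work is needed there either.
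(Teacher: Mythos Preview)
Your proposal is correct and follows essentially the same approach as the paper's proof: track the state through the controlled-$U_y$ and the inverse QFT, invoke the geometric-sum/character-orthogonality identity to obtain $\delta_{x,z}$, and conclude that the final state is $\sum_x \ket{x}_C \otimes P_x\ket{\phi}$ with the padding ensuring outcomes $x>n$ never occur. The only cosmetic difference is that you expand $\ket{\phi}$ into Hamming-weight components before applying $U_y$, whereas the paper applies $U_y$ first and then inserts its spectral form \eqref{eq:Uy_initial_def}; the computations are otherwise identical.
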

\begin{proof}
The result of Step~3 of Algorithm~\ref{alg:hamming-wt-meas-ineff} is to prepare the following superposition state:
\begin{equation}
\frac{1}{\sqrt{N+1}} \sum\limits_{y=0}^N \ket{y}_{C} \otimes U_y \ket{\phi}. 
\end{equation}

Upon acting on the control qubits with the inverse quantum Fourier transform, the state becomes:
\begin{equation}
    \frac{1}{N+1} \sum\limits_{y=0}^N \sum\limits_{z=0}^N \exp\!\left(\frac{-2 \pi \mathrm{i} y z}{N+1} \right) \ket{z}_{C} \otimes U_y \ket{\phi}. 
\end{equation}
Expanding the unitary $U_y$ using \eqref{eq:Uy_initial_def} gives
\begin{equation}
    \label{eq:final_state_triple_sum}
    \frac{1}{N+1} \sum\limits_{z,x,y=0}^N \exp\!\left(\frac{2 \pi \mathrm{i} y (x-z)}{N+1} \right) \ket{z}_{C} \otimes P_x \ket{\phi}.
\end{equation}
Consider the following summation:
\begin{equation}
    \label{eq:summation-S-init}
    S = \sum\limits_{y=0}^N \exp\!\left(\frac{2\pi \mathrm{i} y(x-z)}{N+1} \right).
\end{equation}
Define $c = x-z$. For all $c \in \mathbb{Z} \setminus \{0\}$, 
\begin{align}
    S &= \sum\limits_{y=0}^N \exp\!\left(\frac{2\pi \mathrm{i} y c}{N+1} \right) \nonumber \\
    &= \frac{\exp\!\left(2\pi \mathrm{i} c \right) - 1}{\exp\!\left( \frac{2\pi \mathrm{i} c}{N+1} \right) - 1} \nonumber \\
    &=0.
\end{align}
For $c=0$, 
\begin{equation}
    S = N+1.
\end{equation}
Thus, putting the two cases together, we see that
\begin{equation}
\label{eq:summation-S-final}
    S = (N+1) \delta_{x, z}.
\end{equation}
Plugging $S$ back into \eqref{eq:final_state_triple_sum}, the final state is
\begin{align}
    \ket{\psi_f} &= \frac{1}{N+1} \sum\limits_{z=0}^N \sum\limits_{x=0}^N S \ket{z}_{C} \otimes P_x \ket{\phi} \nonumber \\
    &= \sum\limits_{z=0}^N \sum\limits_{x=0}^N  \delta_{x, z} \ket{z}_{C} \otimes P_x \ket{\phi} \nonumber \\
    &= \sum\limits_{x=0}^N \ket{x}_{C} \otimes P_x \ket{\phi}.
\end{align}

Thus, the probability of outcome $a \in \{0, \ldots, N\}$ when the $C$ register is measured is 
\begin{align}
    p(a) &= \left\Vert (\bra{a} \otimes I) \ket{\psi_f} \right\Vert_2^2 \nonumber \\
    &= \left\Vert P_a \ket{\phi} \right \Vert_2^2,
\end{align}
and the post-measurement state is given by
\begin{equation}
    \ket{\psi_a} = \frac{P_a \ket{\phi}}{\sqrt{p(a)}}. 
\end{equation}

Note that since $\ket{\phi} = \ket{\psi} \ket{0}^{\otimes w}$, and $\ket{\psi}$ can have a Hamming weight of at most $n$, the probability that $a > n$  is equal to zero. Thus, the algorithm realizes the coherent Hamming weight measurement on the input state~$\ket{\psi}$.
\end{proof}
\medskip

Step~3 of the algorithm requires the controlled-$U_y$ operation:
\begin{equation}
\label{eq:step3}
    \sum\limits_{y=0}^N \outerproj{y}_C \otimes U_y.
\end{equation}
Here, the group structure of $\{U_y\}_y$ is critical. We can replace this summation with a restricted sum over just the values of $y$ that are powers of two. The key insight is the fact that every $y$ in the sum can be written in terms of its binary representation. The group structure of the unitaries $\{U_y\}_y$ can then be leveraged to simplify the circuit. More concretely, we replace the summation with 
\begin{equation}
\label{eq:simpleStep3}
    \prod\limits_{m=1}^k (\outerproj{0}_{C_m} \otimes I + \outerproj{1}_{C_m} \otimes U_{2^{m-1}}),
\end{equation}
where $C_m$ denotes the $m$th qubit in register $C$. This is a product of $\mathcal{O}(\log(n))$ elements, each controlled on a single qubit. In contrast, the general sum given above is a sum over $\mathcal{O}(n)$ elements, with multiple control qubits. To show the equivalence between the two operations, we consider the action of \eqref{eq:simpleStep3} on a state of the form $\ket{a}\ket{\omega}$, where $a$ is a bitstring.
\begin{equation}
    \prod\limits_{m=1}^k (\outerproj{0}_{C_m} \otimes I + \outerproj{1}_{C_m} \otimes U_{2^{m-1}})\ket{a}\ket{\omega} .
\end{equation}
Now, expand $a$ in binary as $[a_1 a_2 \ldots a_k]$. We notice that if any $a_i = 1$, the modified Step~3 applies $U_{2^i}$. Thus, the overall effect is to apply the following unitary on the state $\ket{\omega}$:
\begin{align}
     \prod_{\substack{i=0 \\ a_i = 1}}^k U_{2^i} &= U_{\sum_{\substack{i=0 \\ a_i = 1}}^k 2^i} \nonumber \\
     &= U_{\sum_{i=0}^k a_i2^i} \nonumber \\
     &= U_a,
\end{align}
where the first equality holds because the unitaries all commute with each other, another property of the abelian group $C_{N+1}$. Thus, for each input bitstring $a$, the modified controlled-gates perform $U_a$ as needed.

\begin{remark}
Each unitary $U_y$ from \eqref{eq:Uy_new_def} consists of an overall phase and rotations on individual qubits. However, the algorithm uses these unitaries in a controlled manner, which makes the phases relative so that they cannot be ignored. To convert this relative phase into a global phase, we perform a rotation on the control qubit. The equivalence can be seen as follows:
\begin{align}
    &\outerproj{0}_C \otimes I + \outerproj{1}_C \otimes U_y \notag \\
    &= \outerproj{0}_C \otimes I + \outerproj{1}_C \otimes \left[ \exp\! \left(\frac{\mathrm{i} \pi y N }{N+1} \right) R_z\!\left(\frac{2\pi y}{N+1}\right)^{\otimes N}\right] \notag \\
    &= \left(R_z\!\left(\frac{\mathrm{i} \pi y N }{N+1} \right)\right)_C \times \notag \\
    & \qquad \left[ \outerproj{0}_C \otimes I + \outerproj{1}_C \otimes R_z\!\left(\frac{2\pi y}{N+1}\right)^{\otimes N}\right],
\end{align}
up to a global phase. In Figure~\ref{fig:Circuit_Rz}, we redraw the circuit to handle the relative phases appropriately.
\end{remark}

\begin{figure}
    \includegraphics[width=1.0\columnwidth]{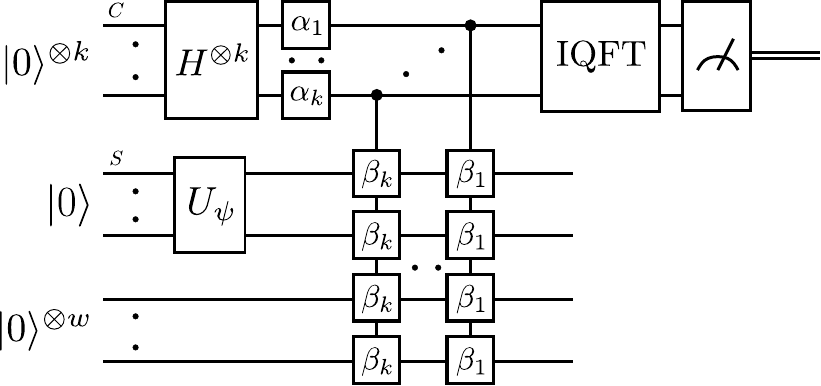}
    \caption{Practical implementation of the circuit from Figure~\ref{fig:Circuit_Uy}. The individual phase angles are pulled out as rotations acting on the control qubits. In the figure, $\alpha_j$ is a label for the rotation gate $R_z(\frac{\pi N}{N+1} 2^{j-1})$ and $\beta_j$ is a label for $R_z(\frac{2\pi}{N+1} 2^{j-1})$.}
    \label{fig:Circuit_Rz}
\end{figure} 

\subsection{Second Algorithm without Zero Padding}

We now present a more efficient algorithm that uses $\mathcal{O}(\log (n))$ control qubits, like Algorithm~\ref{alg:hamming-wt-meas-ineff}, but it instead has no need for extra qubits for zero padding. Note that in Figure~\ref{fig:Circuit_Rz}, there are multiple controlled-$R_z$ rotations acting on the zero-padded qubits. Using Lemma~\ref{lem:Rz-cancellation}, we can remove all these rotation gates and account for them in the form of a single $R_z$ gate on each control qubit. This leaves the zero-padded qubits in the zero state, and the entire algorithm does not affect these qubits. Thus, these qubits can be  eliminated. The circuit corresponding to the revised algorithm (Algorithm~\ref{alg:hamming-wt-meas-eff}) is depicted in Figure~\ref{fig:Circuit_Eff_Rz}.

\begin{Algorithm}
\label{alg:hamming-wt-meas-eff} 

\begin{figure}
    \includegraphics[width=1.0\columnwidth]{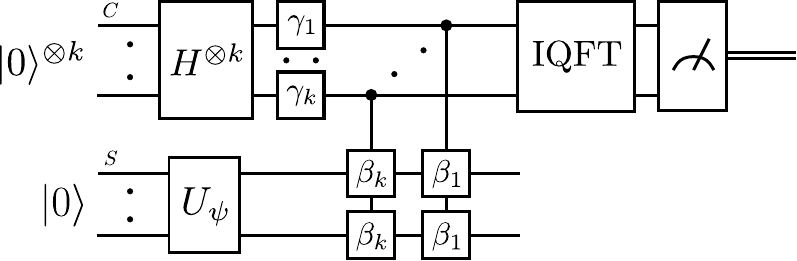}
    \caption{Efficient practical implementation of the circuit from Figure~\ref{fig:Circuit_Uy}. In the figure, $\gamma_j$ is a label for the rotation gate $R_z(\frac{n \pi}{N+1} 2^{j-1})$ and $\beta_j$ is a label for $R_z(\frac{2\pi}{N+1}2^{j-1})$.}
    \label{fig:Circuit_Eff_Rz}
\end{figure} 

The algorithm consists of the following steps:
\begin{enumerate}
\item Prepare a control register $C$, consisting of~$k$ qubits, in the state $|0\rangle_{C}$.

\item Act on register $C$ with $k$ Hadamard gates, leading to the state
\begin{align}
    \ket{+}_C &= H^{\otimes k}\ket{0}_C^{\otimes k} \nonumber \\
    &= \frac{1}{\sqrt{N+1}} \sum\limits_{y=0}^N \ket{y}_C.
\end{align}

\item Act on register $C$ with individual rotation gates $R_z(\gamma_j)$ on qubit $j$, labelled as $C_j$,
\begin{equation}
\label{eq:gamma_rotations}
    \bigotimes_{j=1}^k R_z(\gamma_j)_{C_j}
\end{equation}
where the elements of the set $\{\gamma_j\}_{j=1}^k$ of angles  are defined as
\begin{equation}
    \gamma_j \coloneqq \frac{n \pi}{N+1} 2^{j-1}.
\end{equation}

\item Append the state of interest $\ket{\psi}_S$ and perform the following controlled unitary:
\begin{equation}
\label{eq:controlled-Rz-eff}
\prod_{j=1}^k \left[ \outerproj{0}_{C_j} \otimes I_S + \outerproj{1}_{C_j} \otimes R_z^{\otimes n}(\beta_j)_S \right]   ,
\end{equation}
where the elements of the set $\{\beta_j\}_{j=1}^k$ of angles   are defined as follows:
\begin{equation}
    \beta_j \coloneqq \frac{2\pi}{N+1}2^{j-1}.
\end{equation}

\item Perform an inverse quantum Fourier transform on register~$C$ and measure in the computational basis $\{\outerproj{0}, \ldots, \outerproj{N} \}$.
\end{enumerate}
%The circuit diagram corresponding to the algorithm is given in Figure~\ref{fig:Circuit_Eff_Rz}.
\end{Algorithm}

\begin{theorem}
Algorithm~\ref{alg:hamming-wt-meas-eff} realizes the coherent Hamming-weight measurement $\{P_x\}_{x=0}^{n}$ on the state $\ket{\psi}$.
\end{theorem}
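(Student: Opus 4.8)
The plan is to reduce Algorithm~\ref{alg:hamming-wt-meas-eff} to Algorithm~\ref{alg:hamming-wt-meas-ineff}, whose correctness is already established, by showing that the two circuits implement the same channel on the input state $\ket{\psi}$ (after tracing out, or rather never introducing, the zero-padded qubits). The starting observation is that the only difference between the two algorithms is the treatment of the $w$ zero-padded qubits in register $S$: in Algorithm~\ref{alg:hamming-wt-meas-ineff} they are present and acted upon by controlled-$R_z$ gates, while in Algorithm~\ref{alg:hamming-wt-meas-eff} they are absent, with their effect absorbed into the single-qubit rotations $R_z(\gamma_j)$ on the control qubits and into the adjusted phase angle $\gamma_j = \frac{n\pi}{N+1}2^{j-1}$ versus $\alpha_j = \frac{\pi N}{N+1}2^{j-1}$.

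First I would recall, from Figure~\ref{fig:Circuit_Rz} and the preceding remark, that the practical form of Algorithm~\ref{alg:hamming-wt-meas-ineff} applies, for each control qubit $C_j$, a rotation $R_z(\alpha_j)_{C_j}$ together with the controlled operation $\outerproj{0}_{C_j}\otimes I + \outerproj{1}_{C_j}\otimes R_z(\beta_j)^{\otimes N}_S$, where the $N$-fold tensor power acts on all $N = n + w$ qubits. Splitting $R_z(\beta_j)^{\otimes N} = R_z(\beta_j)^{\otimes n}\otimes R_z(\beta_j)^{\otimes w}$, the factor acting on the $n$ data qubits is exactly the controlled operation in \eqref{eq:controlled-Rz-eff}, so it suffices to show that the controlled-$R_z(\beta_j)^{\otimes w}$ acting on the $w$ zero qubits, together with $R_z(\alpha_j)_{C_j}$, is equivalent up to a global phase to $R_z(\gamma_j)_{C_j}$ acting alone. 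Here I invoke Lemma~\ref{lem:Rz-cancellation} repeatedly: since the $w$ padded qubits start in $\ket{0}$ and are never otherwise touched, each of the $w$ controlled-$R_z(\beta_j)$ gates on these qubits can be replaced by $R_z(-\beta_j/2)$ on the control qubit $C_j$, up to a global phase. Composing the $w$ such replacements and the original $R_z(\alpha_j)$, and using additivity of $R_z$ angles, the net rotation on $C_j$ is $R_z(\alpha_j - w\beta_j/2)$; I would then verify the arithmetic identity $\alpha_j - w\beta_j/2 = \gamma_j$, i.e.\ $\frac{\pi N}{N+1}2^{j-1} - \frac{w}{2}\cdot\frac{2\pi}{N+1}2^{j-1} = \frac{\pi(N-w)}{N+1}2^{j-1} = \frac{n\pi}{N+1}2^{j-1}$, which holds because $N - w = (2^k - 1) - (2^k - n - 1) = n$. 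Since the padded qubits remain in $\ket{0}$ throughout and decouple, removing them changes nothing, and Algorithm~\ref{alg:hamming-wt-meas-eff} computes the same post-measurement state and outcome distribution on $\ket{\psi}$ as Algorithm~\ref{alg:hamming-wt-meas-ineff}, which is the claimed coherent Hamming-weight measurement.

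I do not expect a serious obstacle here; the argument is essentially bookkeeping. The one point requiring care is the global-phase accounting: each application of Lemma~\ref{lem:Rz-cancellation} introduces a phase $e^{-\mathrm{i}\beta_j/4}$, and there are also the overall phases $\exp(\mathrm{i}\pi y N/(N+1))$ from the original definition \eqref{eq:Uy_new_def}; I would note that all of these are genuinely global (independent of the data register and, after the QFT and measurement, irrelevant to both $p(a)$ and the normalized post-measurement state $\ket{\psi_a}$), so they can be discarded. A secondary subtlety worth a sentence is that Lemma~\ref{lem:Rz-cancellation} as stated concerns a single target qubit in $\ket{0}$; applying it to the $j$-th control qubit against each of the $w$ padded qubits in turn is valid because after each step the padded qubit in question is returned to (in fact, never left) $\ket{0}$ and the control qubit's state is still of the product form $(\alpha\ket{0}_C + \beta\ket{1}_C)\ket{0}_S$ required by the lemma.
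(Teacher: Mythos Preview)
Your proof is correct, but it follows a different route than the paper's. The paper gives a direct, self-contained computation: it tracks the state through each step of Algorithm~\ref{alg:hamming-wt-meas-eff}, evaluates the dot products $\dopr{\gamma}{y}$ and $\dopr{\beta}{y}$ explicitly, invokes the identity \eqref{eq:Rz-phi-otimes-N} for $R_z^{\otimes n}$, applies the inverse QFT, and collapses the resulting triple sum exactly as in the proof for Algorithm~\ref{alg:hamming-wt-meas-ineff}. You instead argue by reduction: starting from the practical form of Algorithm~\ref{alg:hamming-wt-meas-ineff} in Figure~\ref{fig:Circuit_Rz}, you peel off the $w$ controlled-$R_z(\beta_j)$ gates targeting the padded qubits via Lemma~\ref{lem:Rz-cancellation}, absorb them into the control-qubit rotations, and check the angle arithmetic $\alpha_j - w\beta_j/2 = \gamma_j$. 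This is precisely the heuristic the paper uses to \emph{motivate} Algorithm~\ref{alg:hamming-wt-meas-eff} in the paragraph preceding it, but which it then abandons in favor of the direct calculation. Your approach is shorter and makes the relationship between the two algorithms transparent; the paper's approach has the advantage of being independent of Algorithm~\ref{alg:hamming-wt-meas-ineff} and of sidestepping any worry about iterating Lemma~\ref{lem:Rz-cancellation}. One small imprecision in your last paragraph: after some controlled rotations have acted on the data qubits, the control qubit $C_j$ is no longer literally in a product state $(\alpha\ket{0}+\beta\ket{1})\ket{0}_S$ as Lemma~\ref{lem:Rz-cancellation} assumes; the clean fix is to note that all the controlled-$R_z$ gates commute, so you may apply those targeting the padded qubits first, at which point the hypothesis of the lemma is satisfied verbatim.
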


\begin{proof}
After Step~2 of the algorithm, the state is given by $\ket{+}_C$. To see the action of the gates \eqref{eq:gamma_rotations} on this state, we first find the action on a single computational basis state $\ket{y}_C \equiv \ket{y_k}_{C_k} \ldots \ket{y_1}_{C_1}$.
\begin{align}
    \ket{y}_C &\rightarrow \bigotimes_{j=1}^k R_z(\gamma_j)_{C_j} \ket{y_k}_{C_k} \ldots \ket{y_1}_{C_1} \notag \\
    &= \exp{\!\left( \sum_{j=1}^k \frac{\mathrm{i} \gamma_j}{2}(2y_j - 1) \right)} \ket{y_k}_{C_k} \ldots \ket{y_1}_{C_1} \notag \\
    &= \exp\!\left( \sum_{j=1}^k \mathrm{i} \gamma_j y_j - \frac{\mathrm{i}}{2} \sum_{j=1}^k\gamma_j \right)
    \ket{y_k}_{C_k} \ldots \ket{y_1}_{C_1} \notag \\
    &= \exp\!\left( \mathrm{i} \dopr{\gamma}{y}  - \frac{\mathrm{i}}{2} \Gamma \right) \ket{y}_C,
\end{align}
where the first equality is due to Remark~\ref{rmk:Rz_computational} and $\Gamma$ is defined as
\begin{equation}
    \Gamma \coloneqq \sum_{j=1}^k \gamma_j.
\end{equation}
Thus, the result of Step~3 is given by
\begin{equation}
    \frac{1}{\sqrt{N+1}} \sum_{y=0}^N \exp{\!\left( \mathrm{i} \dopr{\gamma}{y}  - \frac{\mathrm{i}}{2} \Gamma \right)} \ket{y}_C.
\end{equation}

Next, we see the action of the gates \eqref{eq:controlled-Rz-eff} on this state. Similar to the previous discussion, we first find the action when controlled on a single computational basis state $\ket{y}_C \equiv \ket{y_k}_{C_k} \cdots \ket{y_1}_{C_1}$.
\begin{align}
    \ket{y}_C \ket{\psi}_S 
    &\rightarrow \ket{y}_C R_z^{\otimes n}(y_1 \beta_1) \cdots R_z^{\otimes n}(y_k \beta_k) \ket{\psi}_S \notag \\
    &= \ket{y}_C R_z^{\otimes n} (y_1\beta_1 + \cdots + y_k \beta_k) \ket{\psi}_S \notag \\
    &= \ket{y}_C R_z^{\otimes n} \!\left(\sum_{j=1}^k y_j\beta_j \right) \ket{\psi}_S \notag \\
    &= \ket{y}_C R_z^{\otimes n} \!\left(\dopr{y}{\beta} \right) \ket{\psi}_S,
\end{align}
where the first step is due to the reasoning from Remark~\ref{rmk:C-Rz_computational}.

Thus, the result of Step~4 is given by
\begin{equation}
    \frac{1}{\sqrt{N+1}} \sum_{y=0}^N \exp{\!\left( \mathrm{i} \dopr{\gamma}{y}  - \frac{\mathrm{i}}{2} \Gamma \right)} \ket{y}_C \otimes R_z^{\otimes n} \left(\dopr{y}{\beta} \right) \ket{\psi}_S
\end{equation}

To simplify, we now expand the two dot products in the previous expression.
\begin{align}
    \dopr{\gamma}{y} &= \sum_{j=1}^k \gamma_j y_j =\frac{n \pi}{N+1} \sum_{j=1}^k y_j 2^{j-1} = \frac{n \pi y}{N+1}, \\
    \dopr{\beta}{y} &= \sum_{j=1}^k \beta_j y_j =\frac{2 \pi}{N+1} \sum_{j=1}^k y_j 2^{j-1} = \frac{2 \pi y}{N+1}.
\end{align}
Thus, the state can be rewritten as
\begin{multline}
    \frac{1}{\sqrt{N+1}} \exp{\!\left(  - \frac{\mathrm{i}}{2} \Gamma \right)} \sum_{y=0}^N \exp{\!\left( \frac{\mathrm{i}n\pi y}{N+1} \right)} \ket{y}_C \\ \otimes R_z^{\otimes n} \!\left( \frac{2 \pi y}{N+1}\right) \ket{\psi}_S.
    \label{eq:proof-step-key}
\end{multline}
Next, using \eqref{eq:Rz-phi-otimes-N}, we see that
\begin{equation}
    R_z^{\otimes n}\!\left(\frac{2 \pi y}{N+1}\right) = \exp\! \left(\frac{-\mathrm{i}n \pi y}{N+1} \right) \sum_{x=0}^n \exp\! \left( \frac{2\pi \mathrm{i} xy}{N+1} \right) P_x.
\end{equation}
Plugging this equation into \eqref{eq:proof-step-key} and simplifying, we see that the state after Step~4 can thus be expressed as
\begin{multline}
    \frac{1}{\sqrt{N+1}} \exp{\!\left(  - \frac{\mathrm{i}}{2} \Gamma \right)} \sum_{y=0}^N \sum_{x=0}^n \exp\! \left( \frac{2\pi \mathrm{i} xy}{N+1} \right) \\ \ket{y}_C P_x \ket{\psi}_S.
\end{multline}
Now applying the $\operatorname{IQFT}$ to the control register, the state finally becomes
\begin{multline}
 \frac{1}{N+1} \exp{\!\left(  - \frac{\mathrm{i}}{2} \Gamma \right)} \sum_{z,y,x=0}^N  \exp\! \left( \frac{2\pi \mathrm{i} (x-z) y}{N+1} \right) \\ \ket{z}_C P_x \ket{\psi}_S.
\end{multline}
Following the reasoning from \eqref{eq:summation-S-init}--\eqref{eq:summation-S-final}, the final form is then
\begin{equation}
     \exp{\!\left(  - \frac{\mathrm{i}}{2} \Gamma \right)} \sum_{x=0}^N \ket{x}_C \otimes P_x\ket{\psi}_S,
\end{equation}
where $\exp{\!\left(  - \frac{\mathrm{i}}{2} \Gamma \right)}$ is a global phase.

Similar to Algorithm~\ref{alg:hamming-wt-meas-ineff}, the probability of outcome $a \in \{0, \ldots, N\}$ when the $C$ register is measured is 
\begin{equation}
    p(a) = \left\Vert P_a \ket{\psi} \right\Vert_2^2,
\end{equation}
and the post-measurement state is given by
\begin{equation}
    \ket{\psi_a} = \frac{P_a \ket{\psi}}{\sqrt{p(a)}}. 
\end{equation}
Thus, the algorithm realizes a coherent Hamming-weight measurement of the state $\ket{\psi}$.
\end{proof}

\section{Depth-Width Tradeoff and 2D Architecture Implementation}

\label{sec:dep-wid-tradeoff}

In this section, we provide a depth-width tradeoff for Algorithm~\ref{alg:hamming-wt-meas-eff}. We show that the construction above is the width-optimal version, i.e., using the minimal number of control qubits. The tradeoff construction provides a way to use more control qubits with the benefit of a significant reduction of the depth of the overall circuit. This approach was used recently for the task of multivariate trace estimation~\cite{QKW24}, and we show how it can be used here also. It in turn makes use of ideas from fault-tolerant quantum computing, known as Shor error correction~\cite{S96} (see also Figure~2 of~\cite{G10}). 

\begin{figure*}
    \includegraphics[width=0.9\textwidth]{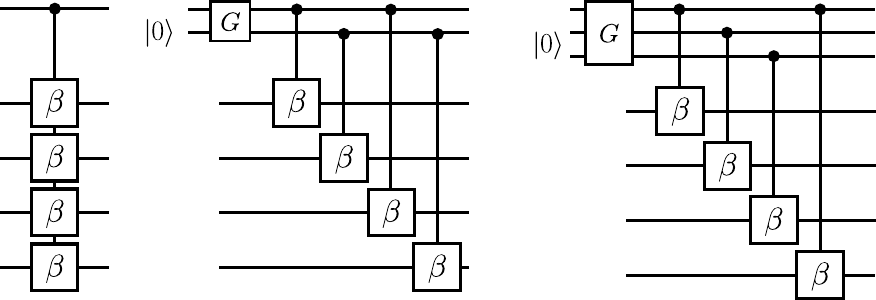}
    \caption{Showing the depth-width tradeoff for an $n=4$ qubit state. The three figures have $s=1, 2, 3$ respectively. The circuit $G$, involving mid-circuit measurements and feedback, encodes the input state into a repetition code and has a constant quantum depth~\cite{QKW24}.}
    \label{fig:GHZ_replacement}
\end{figure*}

We first note that there are~$k$ sets of controlled-$R_z$ gates of the form
\begin{equation}
    \left[ \outerproj{0}_{C_j} \otimes I_S + \outerproj{1}_{C_j} \otimes R_z^{\otimes n}(\beta_j)_S \right]   
\end{equation}
for $j \in \{1, \ldots, k\}$. Consider the case $j=1$, i.e., the set of controlled rotations that are controlled on qubit $C_1$,
\begin{equation}
    \left[ \outerproj{0}_{C_1} \otimes I_S + \outerproj{1}_{C_1} \otimes R_z(\beta_j)_{S_i} \right],
\end{equation}
for $i \in \{1, \ldots, n\}$. There are $n$ controlled rotations that cannot be parallelized because they use the same control qubit. Thus, the depth of this construction is linear, and combining the rotations using all the control qubits, we see that the overall construction has depth $\mathcal{O}(n \log (n))$. 

However, the tensor-product structure of $R_z^{\otimes n}(\beta_j)$ allows for using extra control qubits to reduce the depth. Once again, consider the first control qubit $C_1$. The control qubit $C_1$ can be encoded into a repetition code on $s$ qubits. More concretely, we tensor in $(s-1)$ qubits $A_2, \ldots, A_s$ in the $\ket{0}$ state, and the control qubit $C_1$ is expanded to a control register $\bar{C}_1 \equiv C_1 A_2 \ldots A_s$, as follows:
\begin{multline}
    \left[\alpha \ket{0}_{C_1} + \beta\ket{1}_{C_1}\right] \ket{0}_{A_2} \cdots \ket{0}_{A_s} \\
    \rightarrow \alpha \ket{00 \ldots 0}_{\overline{C}_1} + \beta\ket{11\ldots 1}_{\overline{C}_1}.
\end{multline}
This encoding can be achieved in constant quantum depth by allowing for mid-circuit measurements and feedback~\cite{QKW24}.

Next, the set of $n$ controlled rotations, which were all controlled on the same qubit $C_1$, are now split up among the $s$ different control qubits $C_1, A_2, \ldots, A_s$. Since the controlled rotations now act on different qubits, they can be effectively parallelized. We show the construction for $n=4$ and varying $s$ values in Figure~\ref{fig:GHZ_replacement}.

To round out the algorithm, the $\operatorname{IQFT}$ and measurement need to be translated as well. Alternatively, we need to specify how to perform a repetition-code encoded $\operatorname{IQFT}$, measurement, and classical postprocessing to get the same measurement outcomes as the un-encoded algorithm. We now define the semi-classical encoded IQFT, making use of original insights from~\cite{PhysRevLett.76.3228} (see also~\cite[Figure~1]{Smolin2013} for a visual depiction). 

\begin{figure*}[htbp]
    \includegraphics[width=0.9\textwidth]{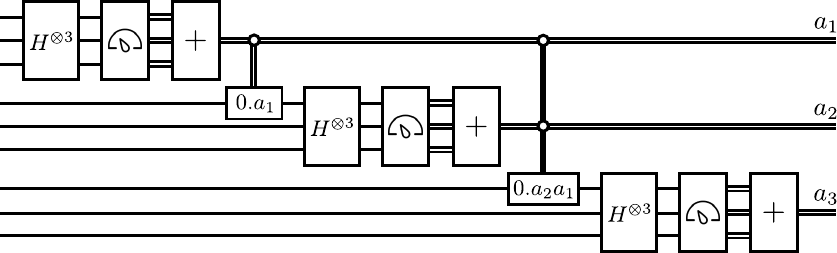}
    \caption{Semi-classical encoded IQFT $F^{\text{sc}}_{k, s}$ for~$k=3$ and $s=3$. The operations labelled by a binary expansion denotes a $Z$-rotation. For example $0.a_2 a_1$ denotes the following gate $R_z(\pi\ 0.a_2 a_1)$.}
    \label{fig:circuit_sc_enc_IQFT}
\end{figure*}

\begin{figure*}[htbp]
    \includegraphics[width=0.95\textwidth]{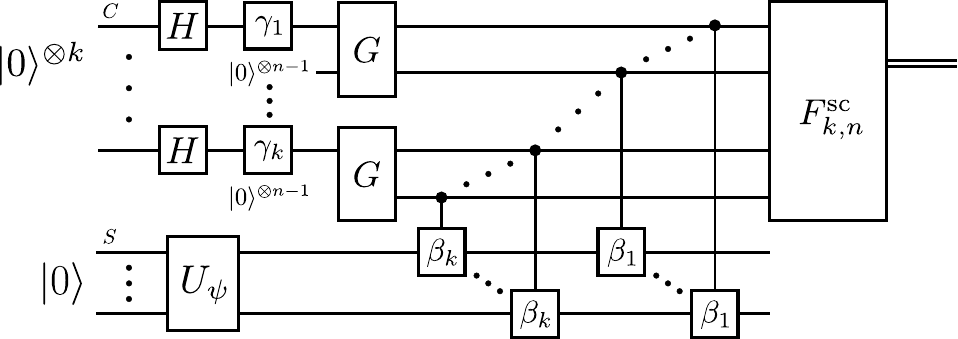}
    \caption{The depth-optimal circuit construction to perform the coherent Hamming weight measurement on the $n$-qubit input state $\ket{\psi} = U_{\psi} \ket{0}$. As discussed in Section~\ref{sec:dep-wid-tradeoff}, the transformation $G$ encodes the input state into a repetition code and has a constant quantum depth~\cite{QKW24}.}
    \label{fig:full-circuit}
\end{figure*}

\begin{figure*}
\includegraphics[width=0.18\textwidth,angle=270]{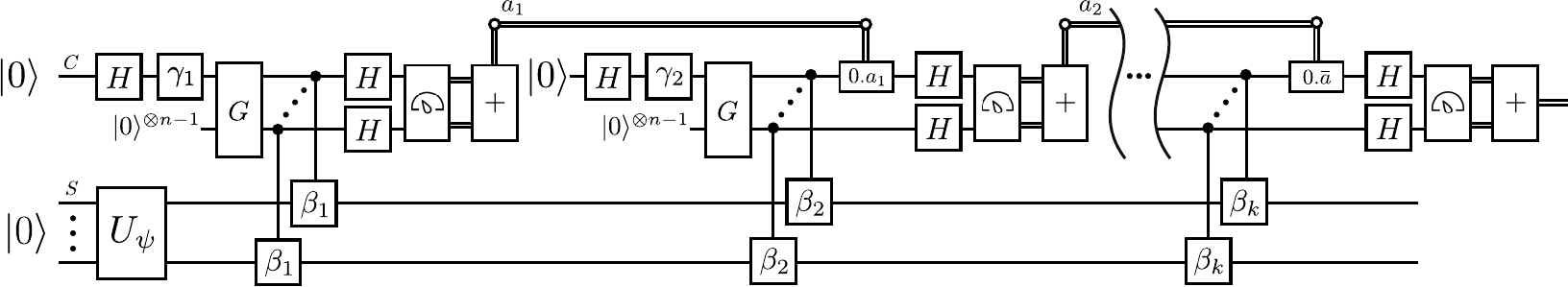}
    \caption{Variation of Figure~\ref{fig:full-circuit} when qubit resets are available. The overall circuit now  needs only $n$ control qubits and has depth $\mathcal{O}(\log n)$.}
    \label{fig:full-circuit-resets}
\end{figure*}

\begin{figure*}[htbp]
    \includegraphics[width=\textwidth]{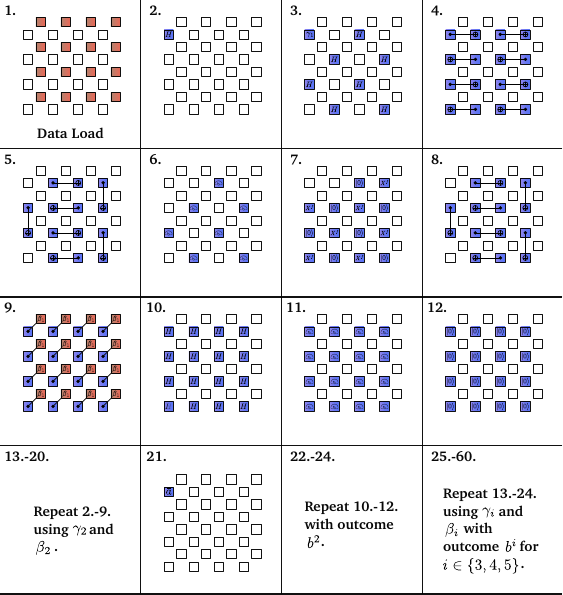}
    \caption{$2$-D implementation of the circuit given in Figure~\ref{fig:full-circuit-resets}. We show the diagram for a $16$-qubit input state $\ket{\psi}$. Furthermore, we set $s=16$ to get the depth-optimal version of the algorithm. In the different operations, the label $\bar{a}$ refers classical controlled $Z$-rotation $R_z\!\left( \frac{\pi}{2^{k-1}}  \bar{a} \right)$ where $\bar{a} \coloneqq a_{k-1} \cdots a_1$. Each bit $a_i$ is the sum of entries of the bitstring $b^i$ modulo $2$.}
    \label{fig:full-circuit-2d}
\end{figure*}

\begin{definition}[Semi-classical encoded IQFT]
    The semi-classical encoded IQFT $F^{\text{sc}}_{k, s}$ acts on a~$k$-qubit quantum state encoded into an $s$-qubit repetition code and has a recursive circuit description consisting of the following steps:
    \begin{enumerate}
        \item Apply $F^{\text{sc}}_{k-1, s}$ on the first $(k-1)s$ qubits, and obtain the measurement outcome $a_1 \cdots a_{k-1}$.
        
        \item Based on the classical data, define the decimal value $\bar{a} \coloneqq a_{k-1} \cdots a_1$, and apply the controlled rotation $R_z\!\left( \frac{\pi}{2^{k-1}}  \bar{a} \right)$ on qubit  $[(k-1)s + 1]$.
        \item For $j \in \{1, \ldots, s\}$, apply the Hadamard gate $H$ on qubit $(k-1)s + j$.
        
        \item For $j \in \{1, \ldots, s\}$, measure qubit $(k-1)s + j$ to get the classical bits $b^k_{1} \cdots b^k_{s}$.
        
        \item Add the classical bits modulo $2$ to obtain $a_k \coloneqq b^k_{1} + \cdots + b^k_{s}$ .
        
        \item Output $a_1, a_2, \ldots, a_k$.
    \end{enumerate}

\end{definition}

The construction for $F^{\text{sc}}_{3, 3}$ is shown in Figure~\ref{fig:circuit_sc_enc_IQFT}.

%\mmw{state proposition here that is being proved.}
\begin{theorem}
    The semi-classical encoded IQFT $F^{\text{sc}}_{k, s}$ acting on an encoded state gives the same measurement outcome probabilities as performing the IQFT on the original quantum state and measuring in the computational basis. 
\end{theorem}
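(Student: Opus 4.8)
The plan is to reduce the claim to the already-known unencoded case and then bridge the gap by induction on~$k$. Observe that $F^{\text{sc}}_{k,1}$ (the $s=1$ specialization) is precisely the semi-classical inverse Fourier transform of Griffiths and Niu~\cite{PhysRevLett.76.3228}, which is known to reproduce exactly the outcome statistics of the IQFT followed by a computational-basis measurement. Hence it suffices to prove that $F^{\text{sc}}_{k,s}$, applied to the repetition-code encoding of a $k$-qubit register, yields the same joint distribution of outcomes $(a_1,\dots,a_k)$ as $F^{\text{sc}}_{k,1}$ applied to the unencoded register. Because the recursion in Step~1 applies $F^{\text{sc}}_{k-1,s}$ to a subregister that is entangled with the remaining block, the induction must be stated at the level of channels: for an arbitrary state of the $k$ logical qubits together with a reference system~$R$, the encoded procedure and the unencoded procedure produce the same classical--quantum state on $(a_1,\dots,a_k,R)$. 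It then suffices to check this on pure states with $R$ a purifying register.

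The induction rests on two elementary facts about the $s$-qubit repetition code, which I would record as lemmas. First, a $Z$-rotation $R_z(\theta)$ applied to the first physical qubit of a block equals the logical $R_z(\theta)$ on that block, since in the codespace the first physical qubit carries the logical value; this handles Step~2. Second, applying $H^{\otimes s}$ to a block, measuring all $s$ qubits, and XOR-ing the outcomes produces the same classical bit $a_k$ \emph{and} the same conditional residual state on the complementary registers as applying a logical Hadamard to the block and measuring it once; this handles Steps~3--5. The second fact follows from the expansion $H^{\otimes s}\ket{0}^{\otimes s}=2^{-s/2}\sum_b \ket{b}$ and $H^{\otimes s}\ket{1}^{\otimes s}=2^{-s/2}\sum_b (-1)^{|b|}\ket{b}$: for a state $\alpha\ket{0}^{\otimes s}\ket{\xi_0}+\beta\ket{1}^{\otimes s}\ket{\xi_1}$, every measurement outcome $b$ of even weight leaves the residual state proportional to $\alpha\ket{\xi_0}+\beta\ket{\xi_1}$ and every outcome of odd weight leaves it proportional to $\alpha\ket{\xi_0}-\beta\ket{\xi_1}$, and there are $2^{s-1}$ strings of each parity, so XOR-ing reproduces exactly the post-measurement ensemble of the logical Hadamard-and-measure step.

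With these in hand the induction is short. The base case $k=1$ is exactly the second lemma (the rotation in Step~2 is vacuous). For the inductive step, feed the encoded state into $F^{\text{sc}}_{k,s}$: Step~1 applies $F^{\text{sc}}_{k-1,s}$ to the first $k-1$ blocks while the last block together with $R$ serves as the reference; by the inductive hypothesis the resulting outcomes $a_1,\dots,a_{k-1}$ and the conditional state on the last block match those produced by $F^{\text{sc}}_{k-1,1}$ on the first $k-1$ logical qubits --- here one uses that the unencoded semi-classical IQFT on the first $k-1$ qubits acts trivially on block~$k$ and hence commutes with its encoding. Steps~2--5 then act only on the last block and, by the two lemmas, implement exactly ``apply $R_z(\tfrac{\pi}{2^{k-1}}\bar a)$, apply $H$, and measure'' on the $k$-th logical qubit, which is the final stage of $F^{\text{sc}}_{k,1}$. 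Composing the two stages gives the claimed equality of channels, and specializing to a trivial reference gives the statement about outcome probabilities.

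The main obstacle is bookkeeping rather than any deep idea: one must carry the reference system $R$ through the recursion so that the entanglement between the block measured last and the blocks processed first is tracked correctly, and in particular one must verify the second lemma at the level of the conditional post-measurement state, not merely the marginal outcome distribution, and check that encoding a block commutes with the earlier-stage operations. A secondary point is that this argument leans on~\cite{PhysRevLett.76.3228} for the unencoded equivalence; a fully self-contained proof would, in addition, unroll the Coppersmith factorization of the QFT to confirm that $F^{\text{sc}}_{k,1}$ with the stated angles $\tfrac{\pi}{2^{k-1}}\bar a$ reproduces the IQFT measurement statistics, which is routine but tedious.
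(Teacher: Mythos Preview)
Your proposal is correct and rests on the same two building blocks the paper uses: that a single-qubit $R_z$ on one physical qubit of a repetition-code block acts as the logical $R_z$, and that $H^{\otimes s}$ followed by measurement and XOR of the outcomes reproduces the logical Hadamard-then-measure step. The paper establishes exactly these two facts by a direct computation of probabilities (writing $p(a)=\tfrac{1}{2}[1+(-1)^a\cdot 2\operatorname{Re}(\alpha\beta^*)]$ in both the encoded and unencoded cases) and then asserts the conclusion, without an explicit induction and without tracking conditional post-measurement states.

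Your organization differs in two useful ways. First, you factor the argument through the $s=1$ case and invoke Griffiths--Niu for the unencoded equivalence, which cleanly separates ``encoding is transparent to the semi-classical IQFT'' from ``the semi-classical IQFT reproduces the IQFT statistics.'' Second, and more importantly, you prove the Hadamard-measure-XOR lemma at the level of conditional residual states with a reference system~$R$, and then run an induction on~$k$ at the level of channels. This is genuinely stronger than the paper's treatment: the paper only matches marginal outcome probabilities for a single block in isolation, whereas the entanglement between the block measured last and the blocks processed first means that matching marginals is not, by itself, sufficient to conclude that the joint distribution over $(a_1,\dots,a_k)$ agrees. Your channel-level induction closes this gap. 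The cost is the bookkeeping you flag; the benefit is a proof that actually establishes the stated theorem rather than its single-block shadow.
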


\begin{proof}
    Consider an $s$-qubit repetition code that encodes the basis states as follows:
    \begin{align}
        \ket{0} &\rightarrow \ket{\bar{0}} \equiv  \ket{0}^{\otimes s}, \\
        \ket{1} &\rightarrow \ket{\bar{1}} \equiv \ket{1}^{\otimes s}.
    \end{align}
    The semi-classical IQFT consists of Hadamard gates and classically controlled $Z$-rotations. First consider that $Z$-rotations acting on just one of the encoded qubits realizes the same operation as the rotation on the un-encoded qubit:
    \begin{align}
        \bar{R}_z(\phi) (\alpha \ket{\bar{0}} + \beta \ket{\bar{1}}) &= \alpha \ket{\bar{0}} + e^{ \mathrm{i} \phi} \beta \ket{\bar{1}}, \\
        R_z(\phi) (\alpha \ket{0} + \beta \ket{1}) &= \alpha \ket{0} + e^{ \mathrm{i} \phi} \beta \ket{1}.
    \end{align}

    We now replace the Hadamard gates and measurement with a tensor-product Hadamard operation, measurement, and classical post processing. Consider the action of a Hadamard gate and a computational basis measurement on an arbitrary state. The probability of outcome $a \in \{0,1\}$ is given by
    \begin{align}
        p(a) &= \vert \langle a \vert H (\alpha \ket{0} + \beta \ket{1}) \vert^2  \\
        % &= \frac{1}{2}\left\vert \left( \bra{0} + (-1)^a \bra{1} \right) \left( \alpha \ket{0} + \beta \ket{1} \right) \right\vert^2 \\
        &= \frac{1}{2} \left[ 1 + (-1)^a \cdot  2\operatorname{Re}(\alpha \beta^*)\right].
    \end{align}

    Now, as seen in Step~3, in the semi-classical encoded IQFT, we apply a Hadamard gate on each qubit and then measure in the computational basis. The probability of obtaining outcomes $b_1 \cdots b_s$ is given by
    \begin{align}
        p(b_1 \cdots b_s) &= \vert \langle b_1 \cdots b_s \vert H^{\otimes s} (\alpha \ket{\bar{0}} + \beta \ket{\bar{1}}) \vert^2 \notag \\
        &= \vert \langle b_1 \cdots b_s \vert H^{\otimes s} (\alpha \ket{0}^{\otimes s} + \beta \ket{1}^{\otimes s}) \vert^2 \notag \\
        &= \frac{1}{2^s} \left[ 1 + (-1)^{b_1 + \cdots + b_s} \cdot  2\operatorname{Re}(\alpha \beta^*)\right].
    \end{align}

    Lastly, we define $a = b_1 + \cdots + b_s$ modulo 2. Then, 
    \begin{align}
        p(a) &= \sum\limits_{b_1, \ldots, b_s} p(a \vert b_1, \ldots, b_s) p(b_1, \ldots b_s) \notag \\
        &= \sum\limits_{b_1, \ldots, b_s} \delta(a, b_1 + \cdots + b_s) p(b_1, \ldots, b_s) \notag \\
        &= \frac{1}{2} \left[ 1 + (-1)^a \cdot  2\operatorname{Re}(\alpha \beta^*)\right].
    \end{align}

    Thus, measuring all the encoded qubits and setting $a_k = b^k_1 + \ldots + b^k_s$ leads to the same probabilities as the unencoded operation. 
\end{proof}

\begin{remark}
    The depth of the semi-classical encoded IQFT $F^{\text{sc}}_{k, s}$ is linear in~$k$ and independent of $s$.
\end{remark}
\begin{proof}
    On each logical qubit, $F^{\text{sc}}_{k, s}$ consists of two operations: $H^{\otimes s}$ and a classically controlled rotation. In the traditional IQFT, the rotations are quantum-controlled and act on all qubits before, leading to an $\mathcal{O}(k^2)$ depth. In the semi-classical version, we get a compiled rotation gate, leading to a linear depth.
\end{proof}

\medskip
The depth-optimal version of our algorithm for an $n$-qubit input state is shown in Figure~\ref{fig:full-circuit}. An important point to note here is that, even though Figure~\ref{fig:full-circuit} depicts $nk$ control qubits, if qubit resets are available, we only need $n$ control qubits. This is because each step can be performed in a sequential manner as follows and in constant quantum depth (see Figure~\ref{fig:full-circuit-resets} for this variation): the first $n$ qubits are prepared in a GHZ state, then the controlled rotations labeled by $\beta_1$ are performed, then the first step of the semiclassical encoded IQFT is performed, leading to a measurement outcome $a_1$. Then the $n$ control qubits are reset, prepared in a GHZ state, the controlled rotations labeled by $\beta_2$ are performed, then the second step of the semiclassical encoded IQFT is performed (with an action conditioned on $a_1$). There are $k-2$ more steps like this, thus leading to a $\mathcal{O}(\log n)$ depth circuit with $n$ control qubits.

An implementation of the depth-optimal version on a 2D architecture is given in Figure~\ref{fig:full-circuit-2d}. We note that the overall growth of the depth is still $\mathcal{O}(\log n)$ when restricted to a 2D ladder architecture. One leg of the ladder consists of ancilla qubits, and the other leg consists of the data qubits. 

\section{Resource Counting}

In Table~\ref{tab:resource_count}, we delineate various resource counts for the circuits for different $s$ values. We note that for the different estimates, we assume that qubit resets are available. The depth of the semiclassical encoded $\operatorname{IQFT}$ is $\mathcal{O}(k)$, where~$k = \mathcal{O}(\log n)$ and $n$ is the number of qubits of the input state $\ket{\psi}$. 

\begin{table*}
    \centering
    \begin{tabular}{c|c|c|c}
        & Qubit-Optimal & Arbitary & Depth-optimal \\
        & $s=1$ & $1 < s < n$ & $s=n$ \\
        \hline \hline
        Total Depth & $\mathcal{O}(n \log n)$ & $\mathcal{O}(\lceil \frac{n}{s} \rceil \log n )$ & $\mathcal{O}(\log n)$\\
        \# of Control Qubits & $1$ & $ s$ & $n$\\
    \end{tabular}
    \caption{Resource count for the algorithm construction for varying $s$ values, the latter of which characterizes the depth-width tradeoff. We note that for the different estimates, we assume that qubit resets are available.}
    \label{tab:resource_count}
\end{table*}

\section{Simulation Results}

In this section, we report the results of simulations of our algorithm for a wide range of examples. We benchmarked our algorithm against the true distribution of Hamming weights. For each example, we considered a noiseless simulation, a simulation with shot noise, and a noisy quantum simulation. Lastly, we also ran the algorithm on actual quantum devices. The results are depicted in Figures~\ref{fig:1st-example}--\ref{fig:last-example}.

In most of the cases, we simulated the width-optimal version of our construction as the depth-optimal version requires a larger number of control qubits and mid-circuit measurements, which are a constraint on most available hardware today. However, for a three-qubit example, we also classically simulated it using different $s$ values and the semi-classical encoded IQFT.

For the shot-noise simulations, each run used $10,000$ shots. For the noisy simulations, we used the 16-qubit IBM simulator {\tt FakeGuadalupeV2}. Each run consisted of $10,000$ shots and the median $T_1$ and $T_2$ values were $7.2 \times 10^{-5}$ and $8.8 \times 10^{-5}$, respectively. For the quantum device runs, we used the 127-qubit IBM device {\tt ibm$\_$brisbane}. Each run consisted of $10,000$ shots and the median $T_1$ and $T_2$ values are $2.2 \times 10^{-4}$ and $1.3 \times 10^{-4}$, respectively.

All the program code is available as arXiv ancillary files along with the arXiv posting of this paper.
%Editting figures

\begin{figure*}[htbp]{}
    \centering 
    \begin{subfigure}[b]{0.49\textwidth}   
    \centering
    \includegraphics[width=1\textwidth]{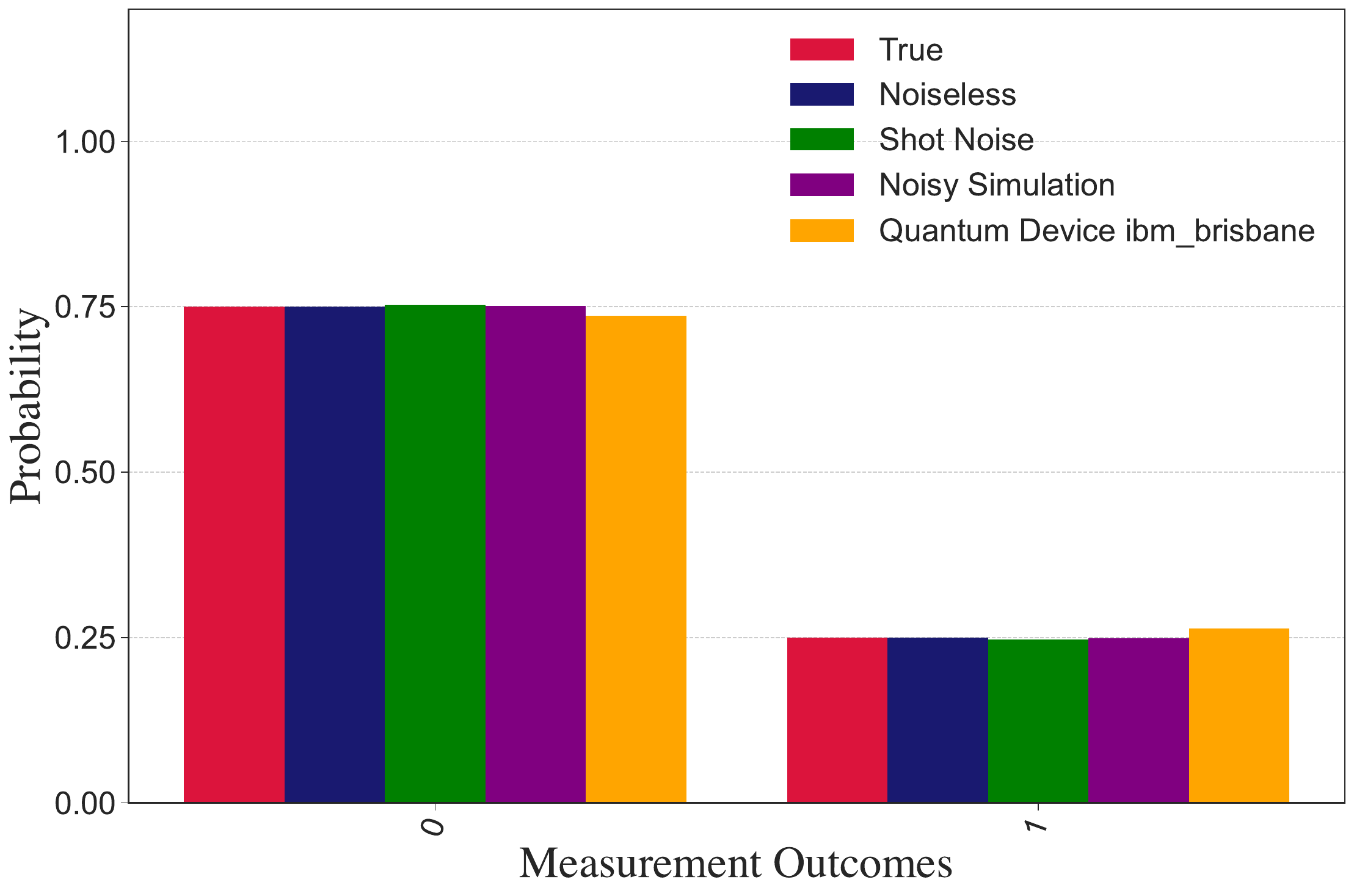}
    \end{subfigure}
    \hfill
    \begin{subfigure}[b]{0.49\textwidth}
    \centering
    \includegraphics[width=1\textwidth]{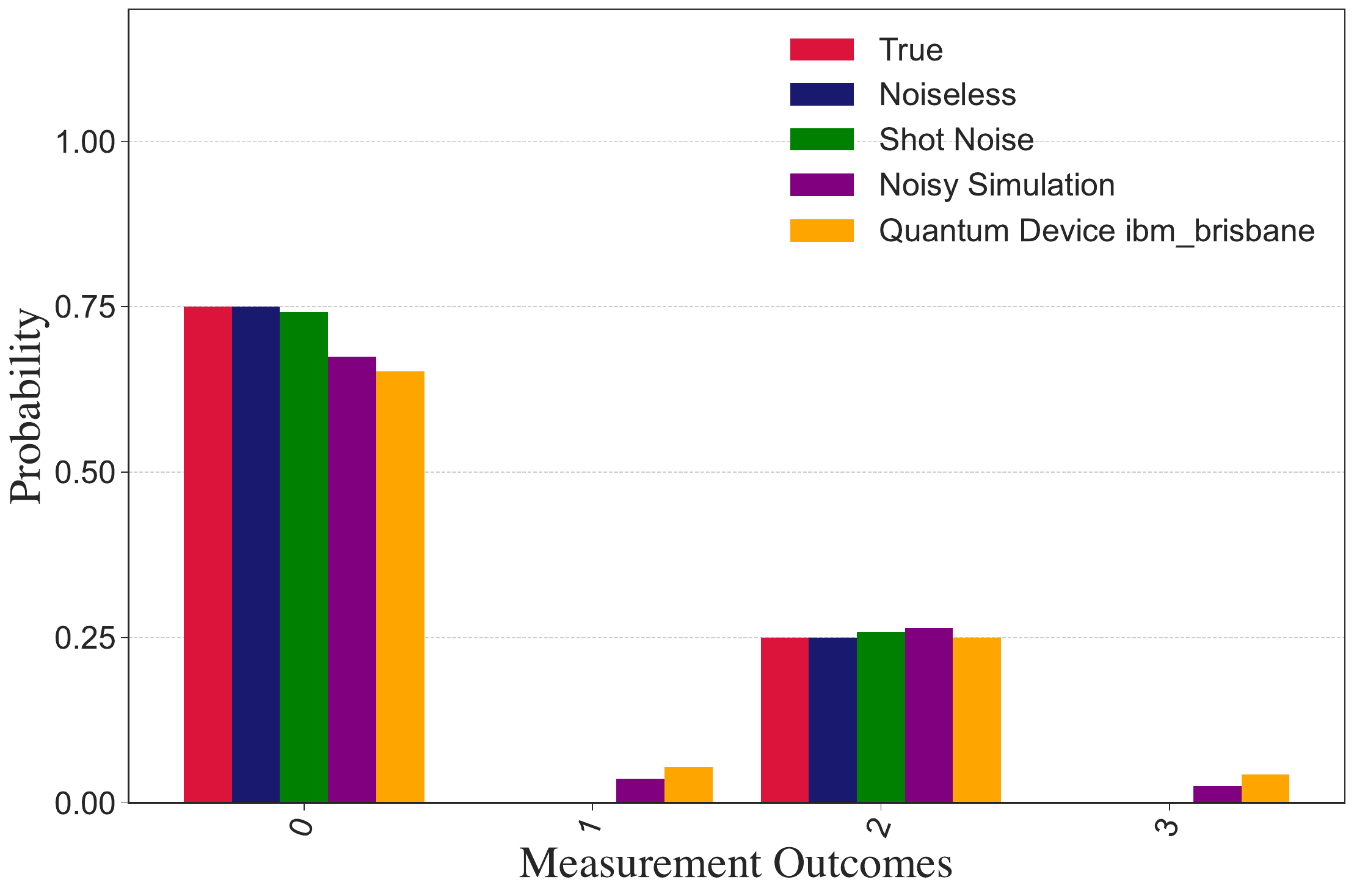}
    \end{subfigure} 
    \hfill
    \vspace{0.5cm}
    \begin{subfigure}[b]{0.49\textwidth}
    \includegraphics[width=1\textwidth]{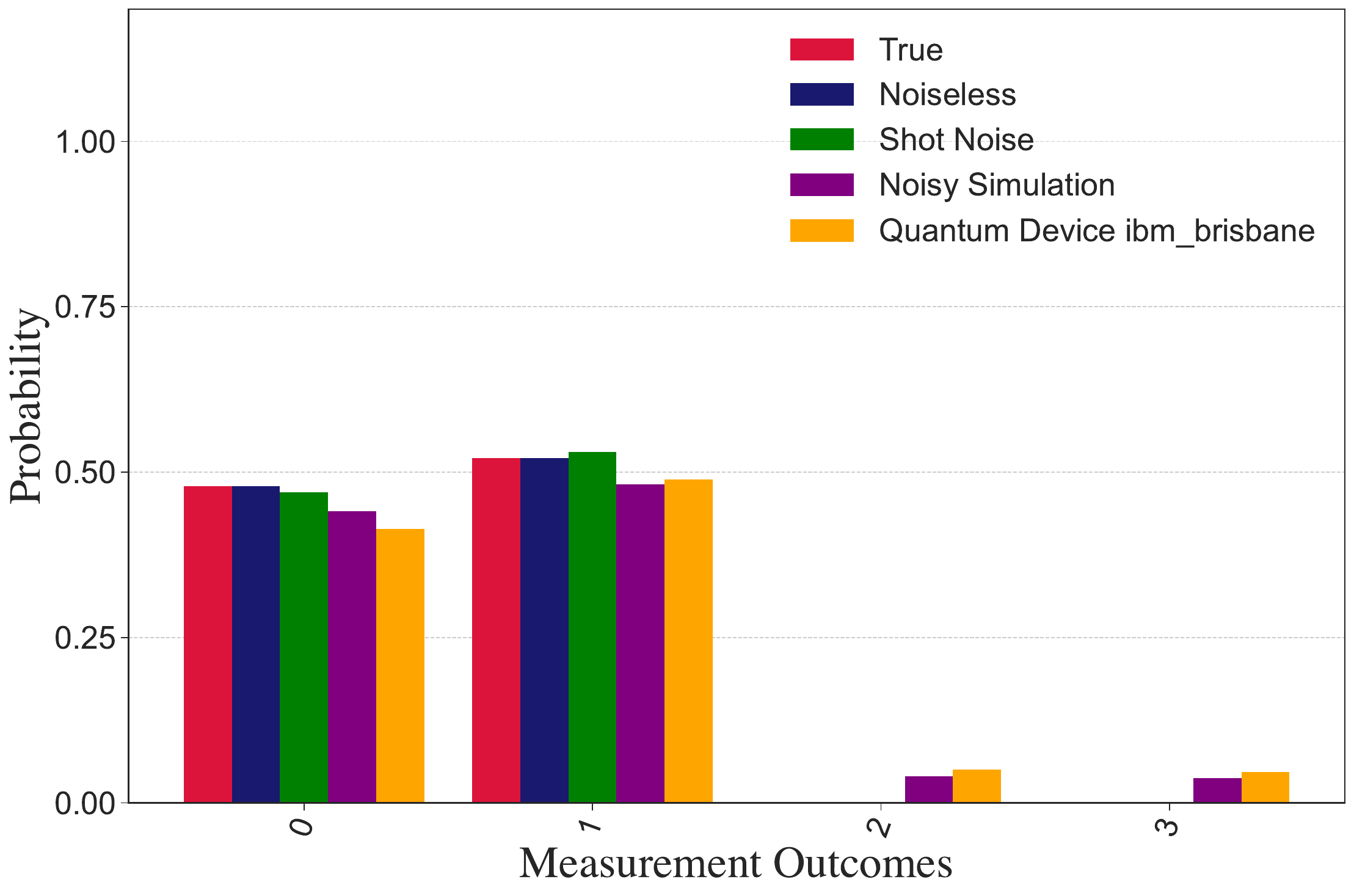}
    \end{subfigure} 
\captionsetup{justification=centering}
\caption{Distribution of counts obtained for the input state $\ket{\psi} = R_y\!\left(\frac{\pi}{3}\right) \ket{0}$ (top left),  the input state $\ket{\psi} = \frac{1}{2}\left( \sqrt{3}\ket{00} + \ket{11}\right)$ (top right), and a randomly generated two-qubit input state (bottom).}
\label{fig:1st-example}
\end{figure*}

\begin{figure*}[htbp]
    \centering
    \begin{subfigure}[b]{0.49\textwidth}
    \centering
    \includegraphics[width=1.0\textwidth]{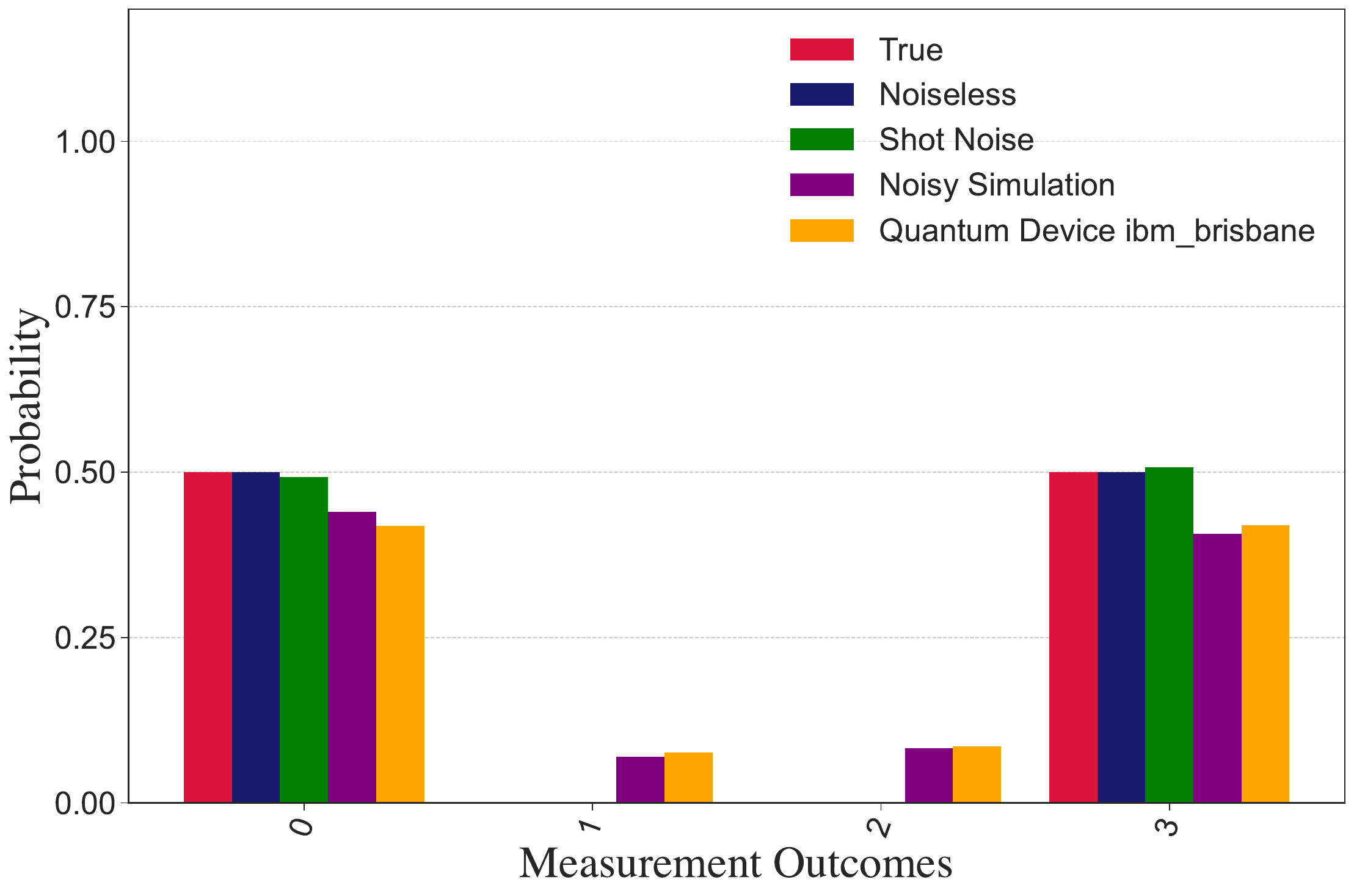}
    \end{subfigure}
    \hfill
    \begin{subfigure}[b]{0.49\textwidth}
    \centering
    \includegraphics[width=1.0\textwidth]{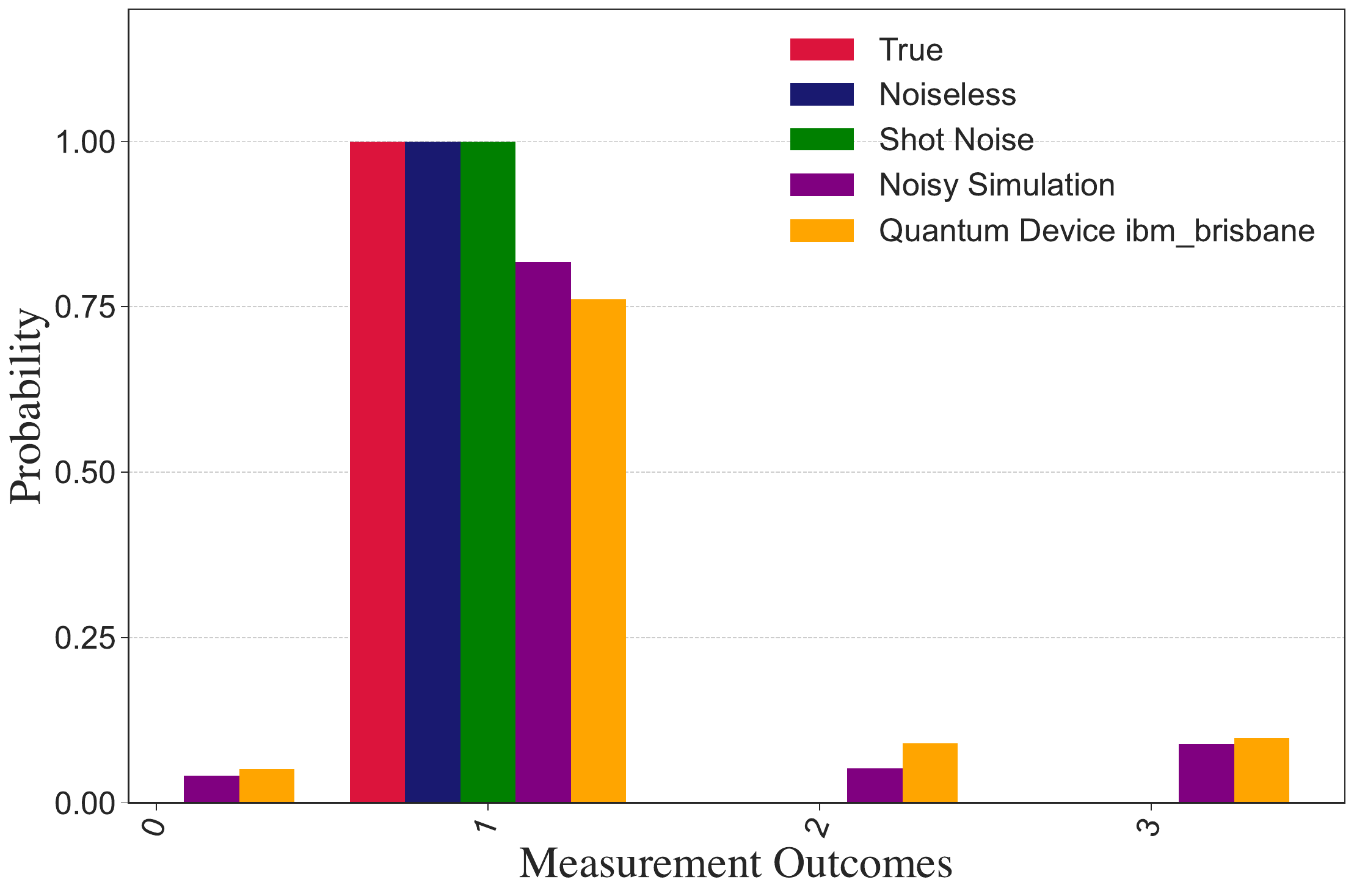}
    \end{subfigure} 
    \hfill
    \vspace{0.5cm}
    \begin{subfigure}[b]{0.49\textwidth}
    \includegraphics[width=1.0\textwidth]{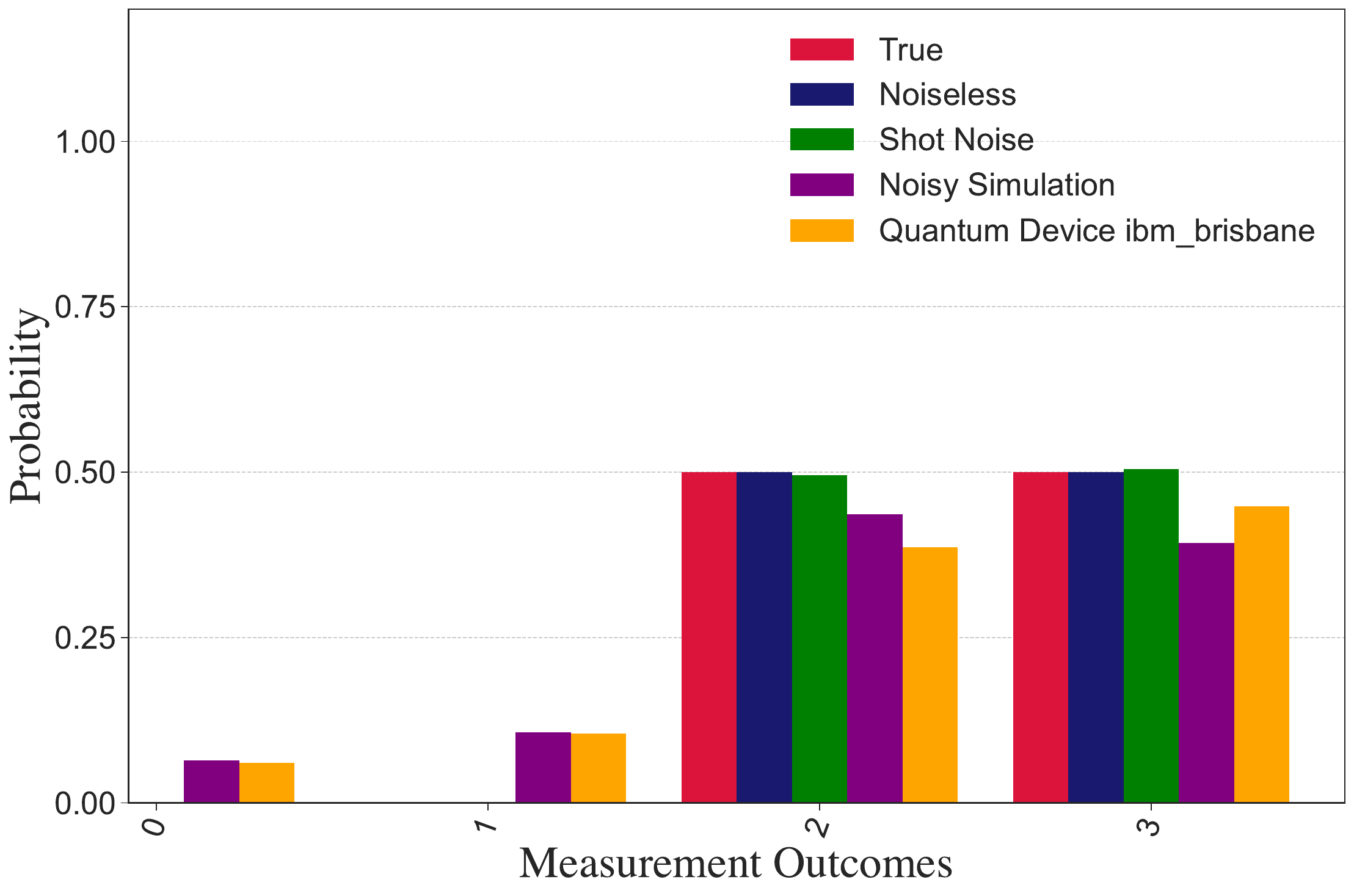}
    \end{subfigure} 
    \captionsetup{justification=centering}
    \caption{Distribution of counts obtained for the input state $\ket{\operatorname{GHZ}} = \frac{1}{\sqrt{2}}\left( \ket{000} + \ket{111}\right)$ (top left), the input state $\ket{\operatorname{W}} = \frac{1}{\sqrt{3}}\left( \ket{001} + \ket{010} + \ket{100}\right)$ (top right), and a randomly generated three-qubit input state (bottom).}
\end{figure*}

\begin{figure*}[htbp]
    \centering
    \begin{subfigure}[b]{0.49\textwidth}
    \centering
    \includegraphics[width=1.0\textwidth]{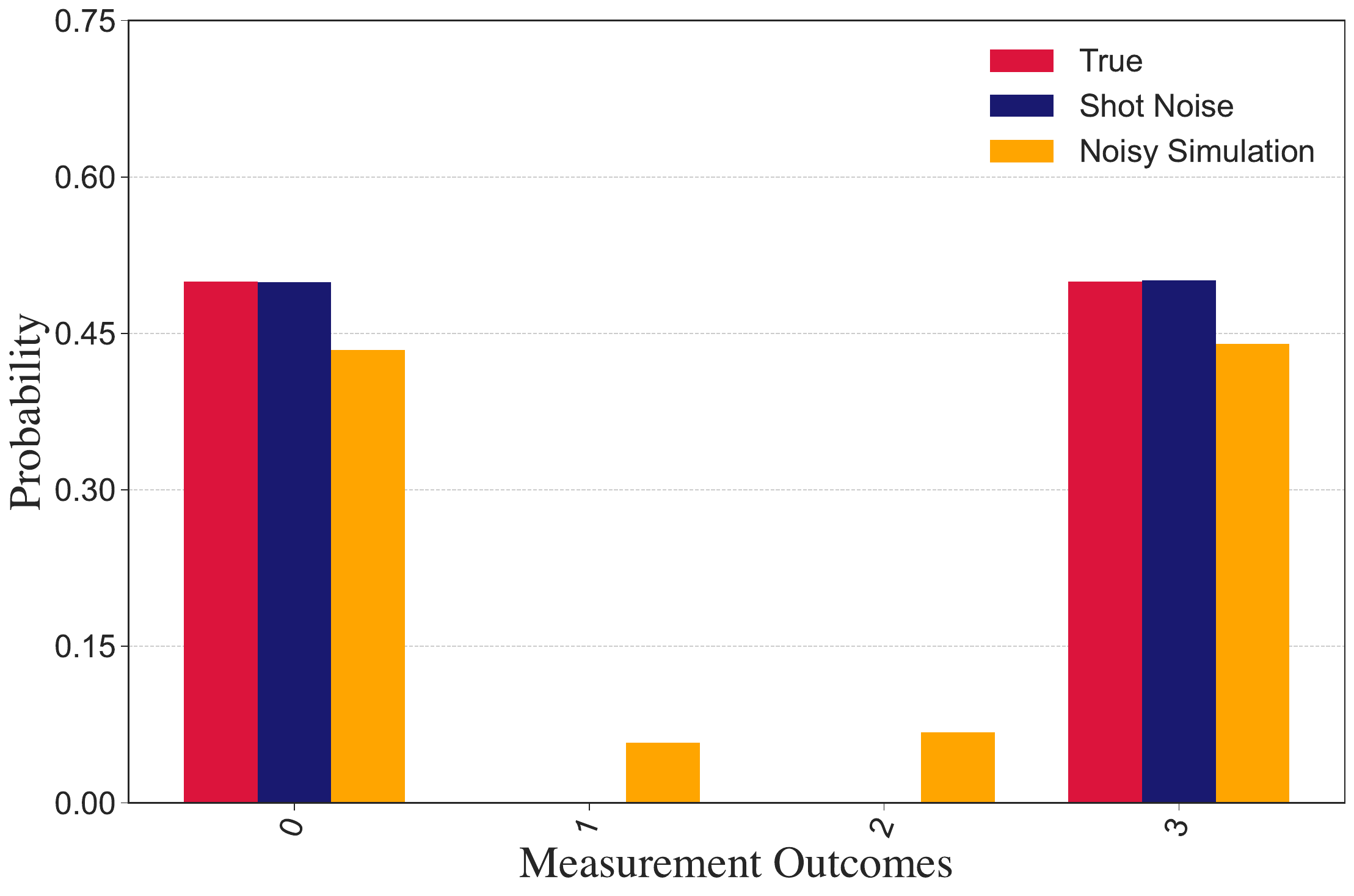}
    \end{subfigure}
    \hfill
    \begin{subfigure}[b]{0.49\textwidth}
    \centering
    \includegraphics[width=1.0\textwidth]{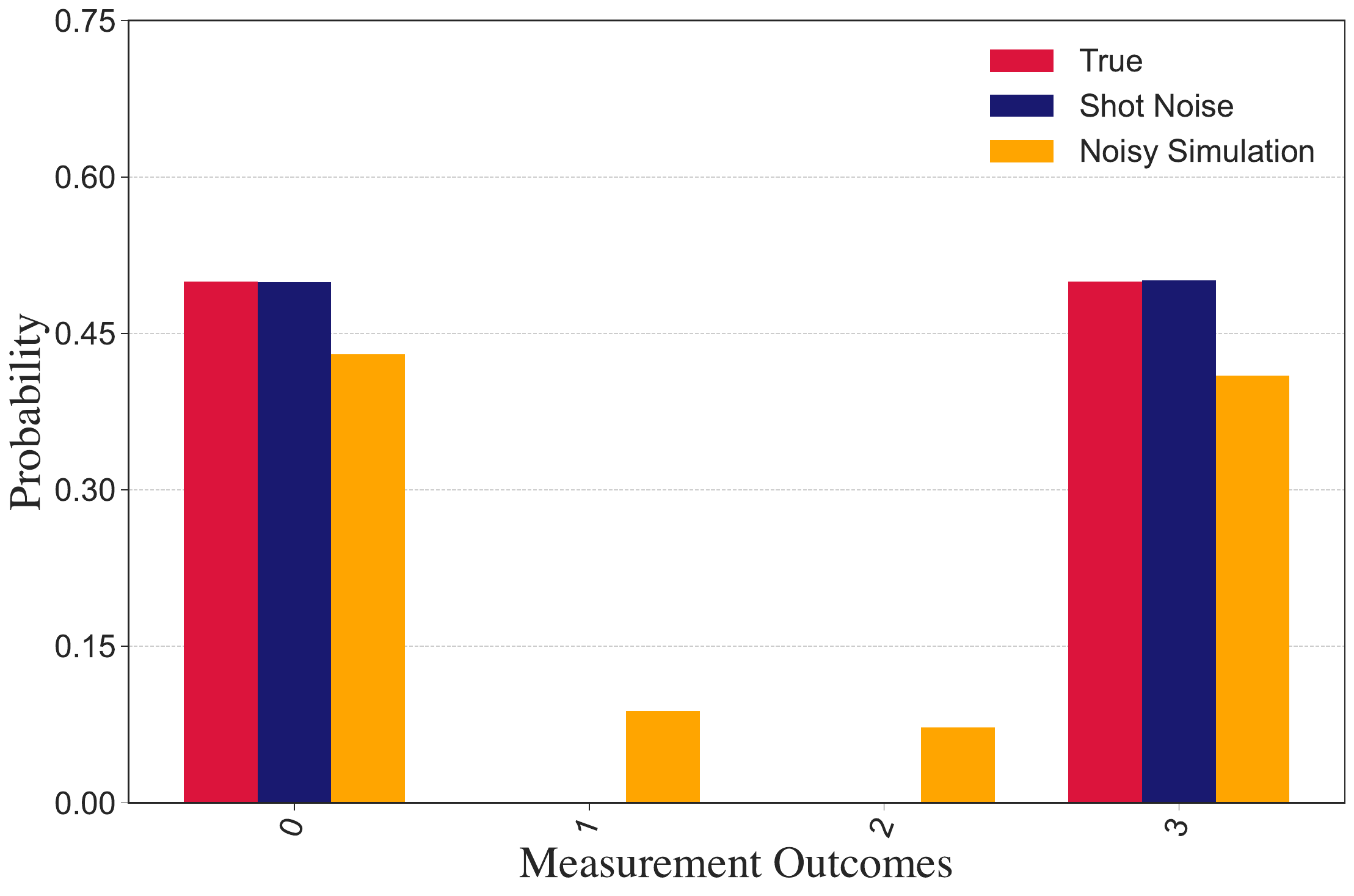}
    \end{subfigure} 
    \hfill
    \vspace{0.5cm}
    \begin{subfigure}[b]{0.49\textwidth}
    \includegraphics[width=1.0\textwidth]{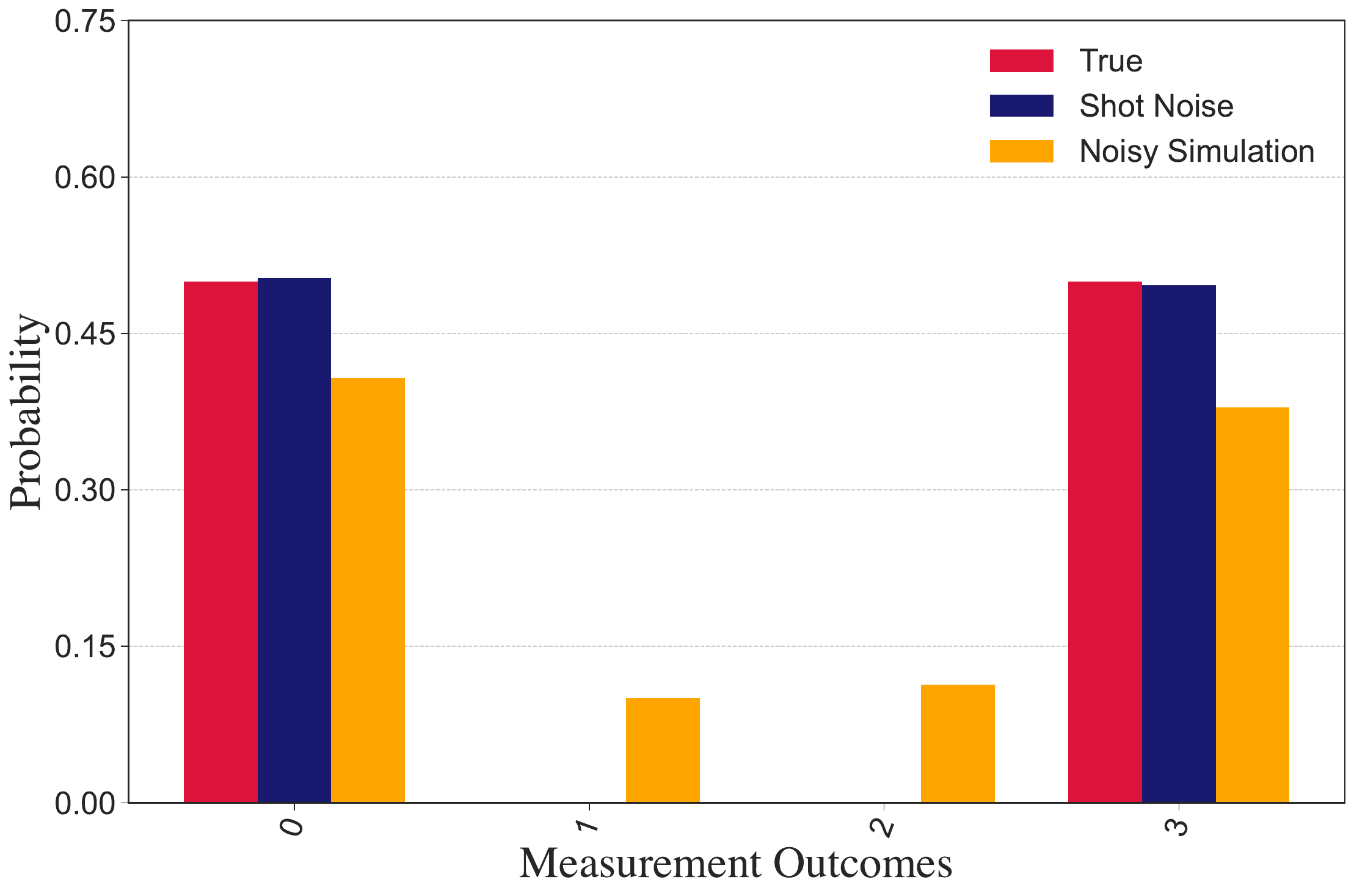}
    \end{subfigure} 
    \captionsetup{justification=centering}
    \caption{Distribution of counts obtained for the input state $\ket{\operatorname{GHZ}} = \frac{1}{\sqrt{2}}\left( \ket{000} + \ket{111}\right)$ using $s=1, 2, 3$, respectively (top left, top right, bottom), and the semi-classical encoded IQFT.}
\end{figure*}

\begin{figure*}[htbp]
    \begin{subfigure}[b]{0.49\textwidth}
    \centering
    \includegraphics[width=1.0\textwidth]{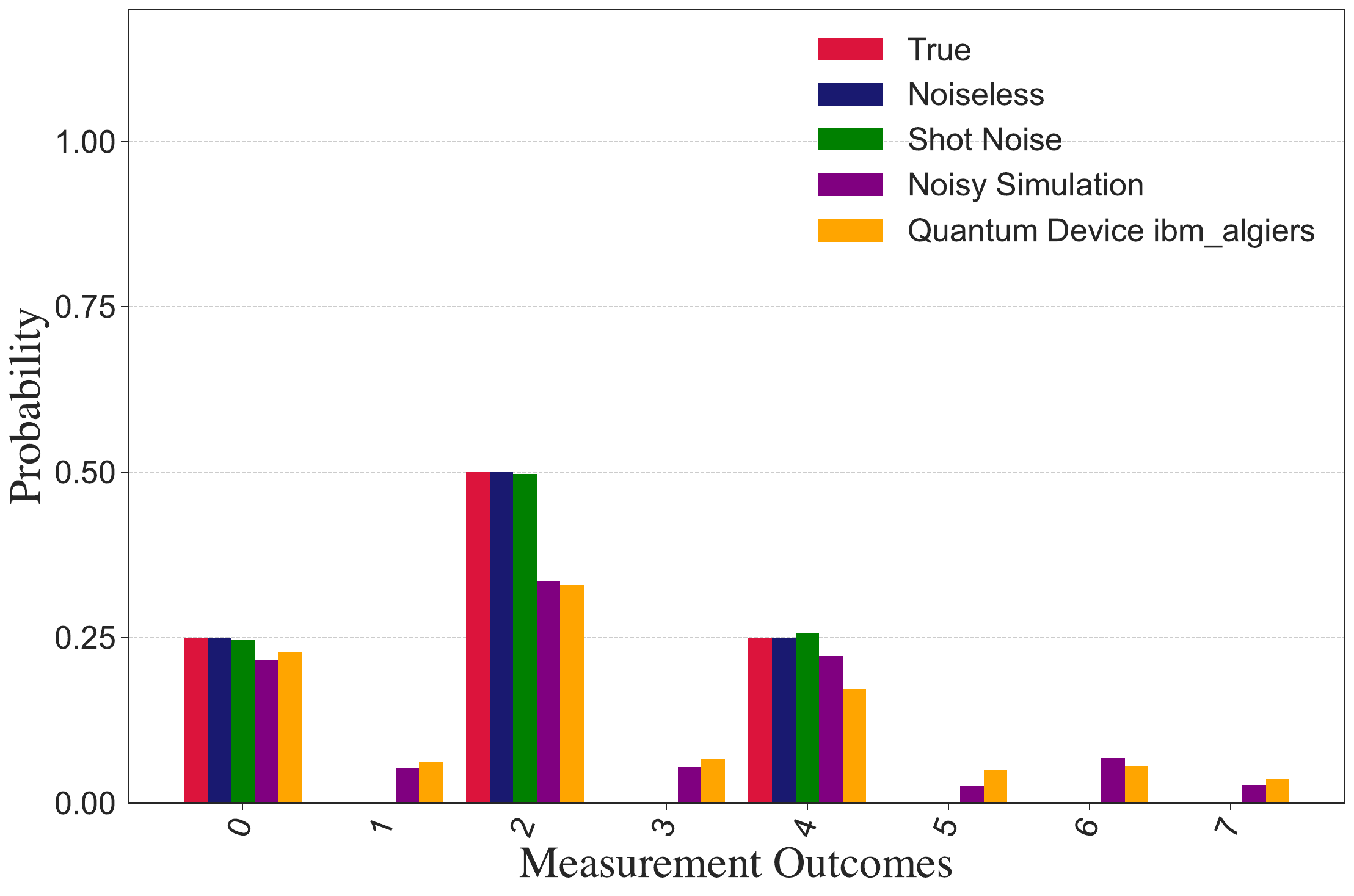}
    \end{subfigure}
    \hfill
    \begin{subfigure}[b]{0.49\textwidth}
    \centering
    \includegraphics[width=1.0\textwidth]{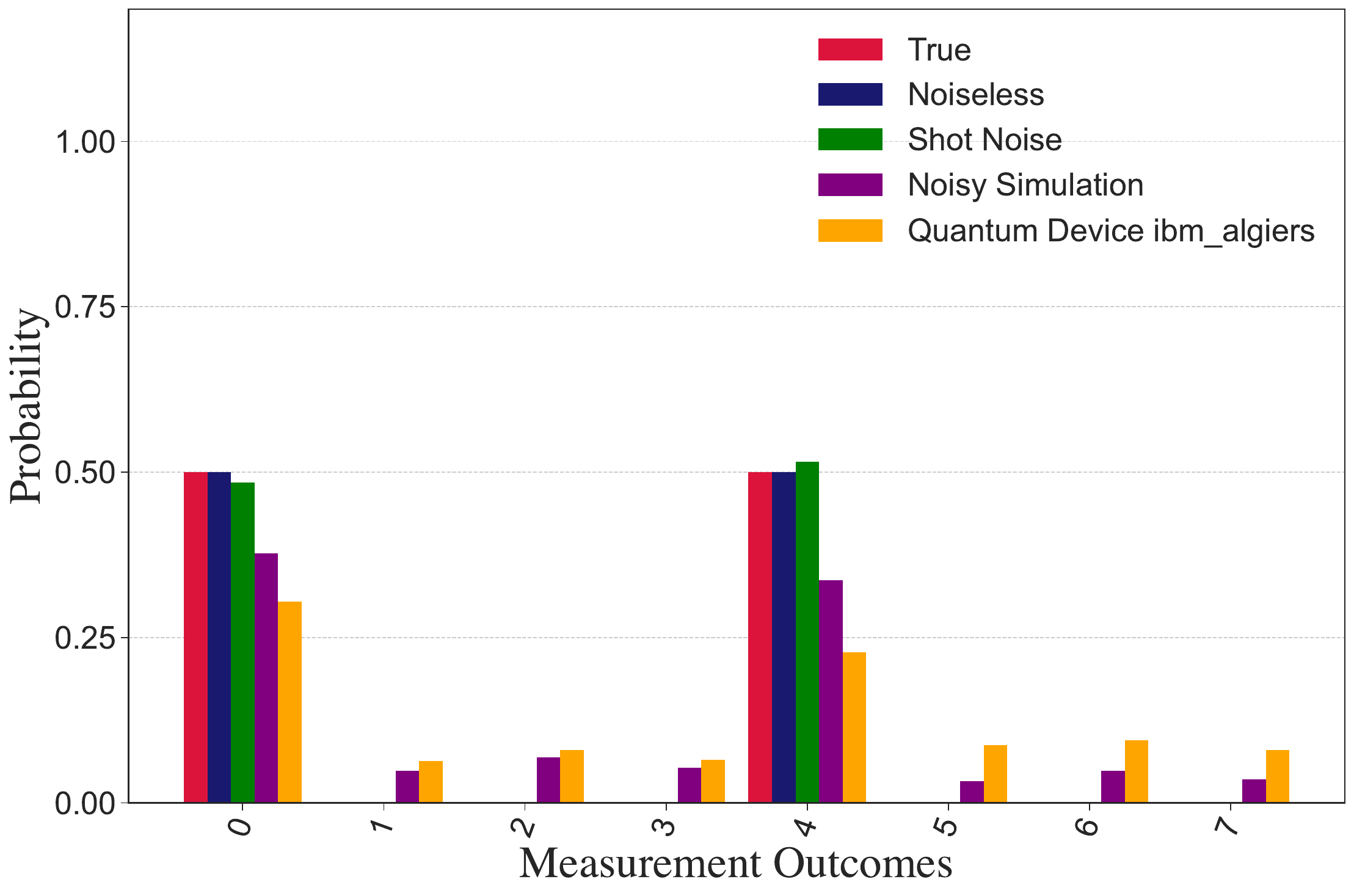}
    \end{subfigure}
    \hfill
    \begin{subfigure}[b]{0.49\textwidth}
    \centering
    \includegraphics[width=1.0\textwidth]{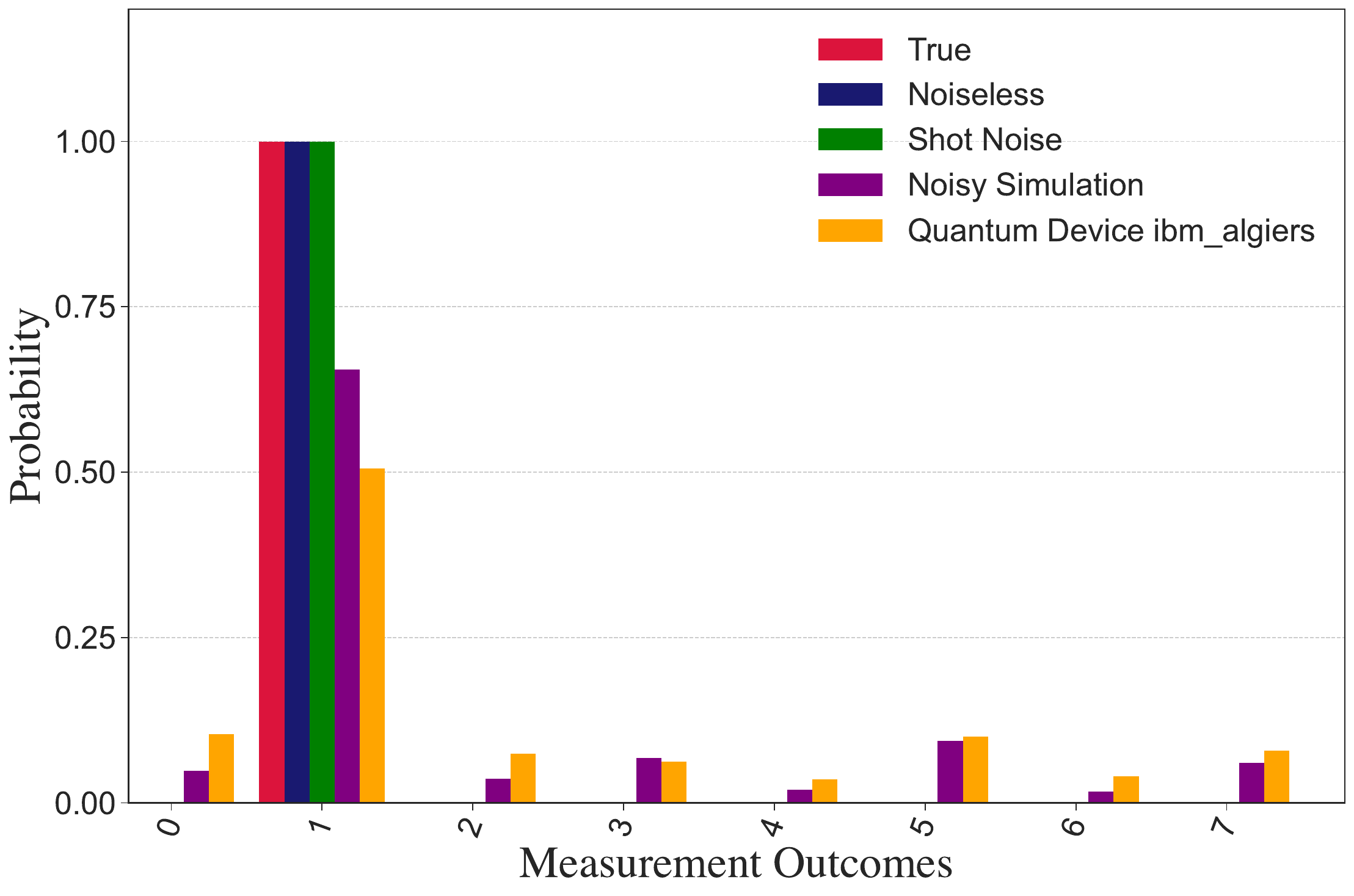}
    \end{subfigure}
    \hfill
    \begin{subfigure}[b]{0.49\textwidth}
    \centering
    \includegraphics[width=1.0\textwidth]{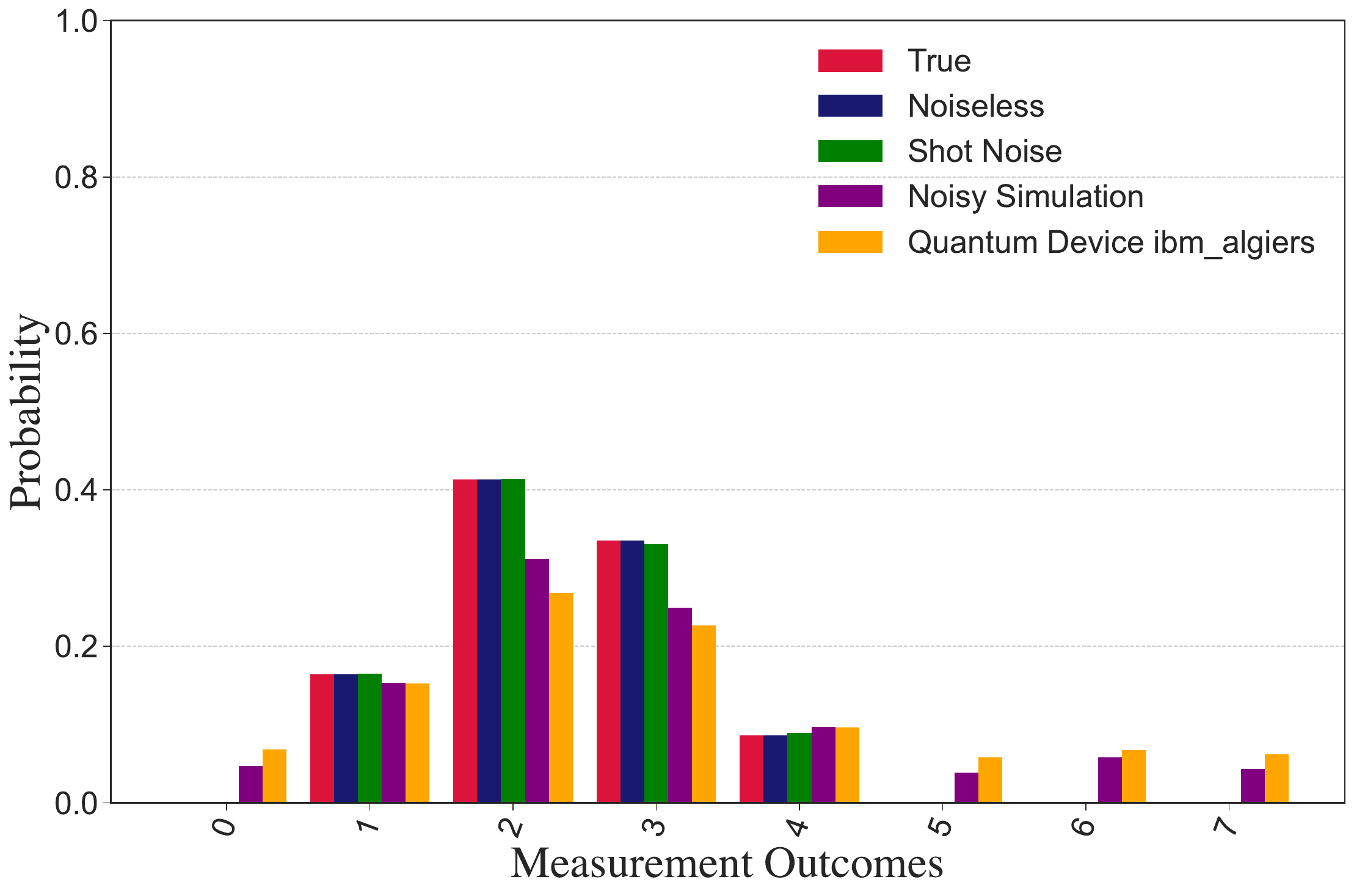}
    \end{subfigure}
    \captionsetup{justification=centering}
    \caption{Distribution of counts obtained for the input state $\ket{\psi} = \ket{\phi^+}\ket{\phi^+}$ (top left), the input state $\ket{\psi} = \frac{1}{\sqrt{2}}\left( \ket{0000} + \ket{1111}\right)$ (top right), the input state $\ket{\operatorname{W}_4} = \frac{1}{\sqrt{4}}\left( \ket{0001} + \ket{0010} + \ket{0100} + \ket{1000}\right)$ (bottom left), and a randomly generated four-qubit input state (bottom right).}
\end{figure*}

\begin{figure*}[htbp]
    \begin{subfigure}[b]{0.49\textwidth}
    \centering
    \includegraphics[width=1.0\textwidth]{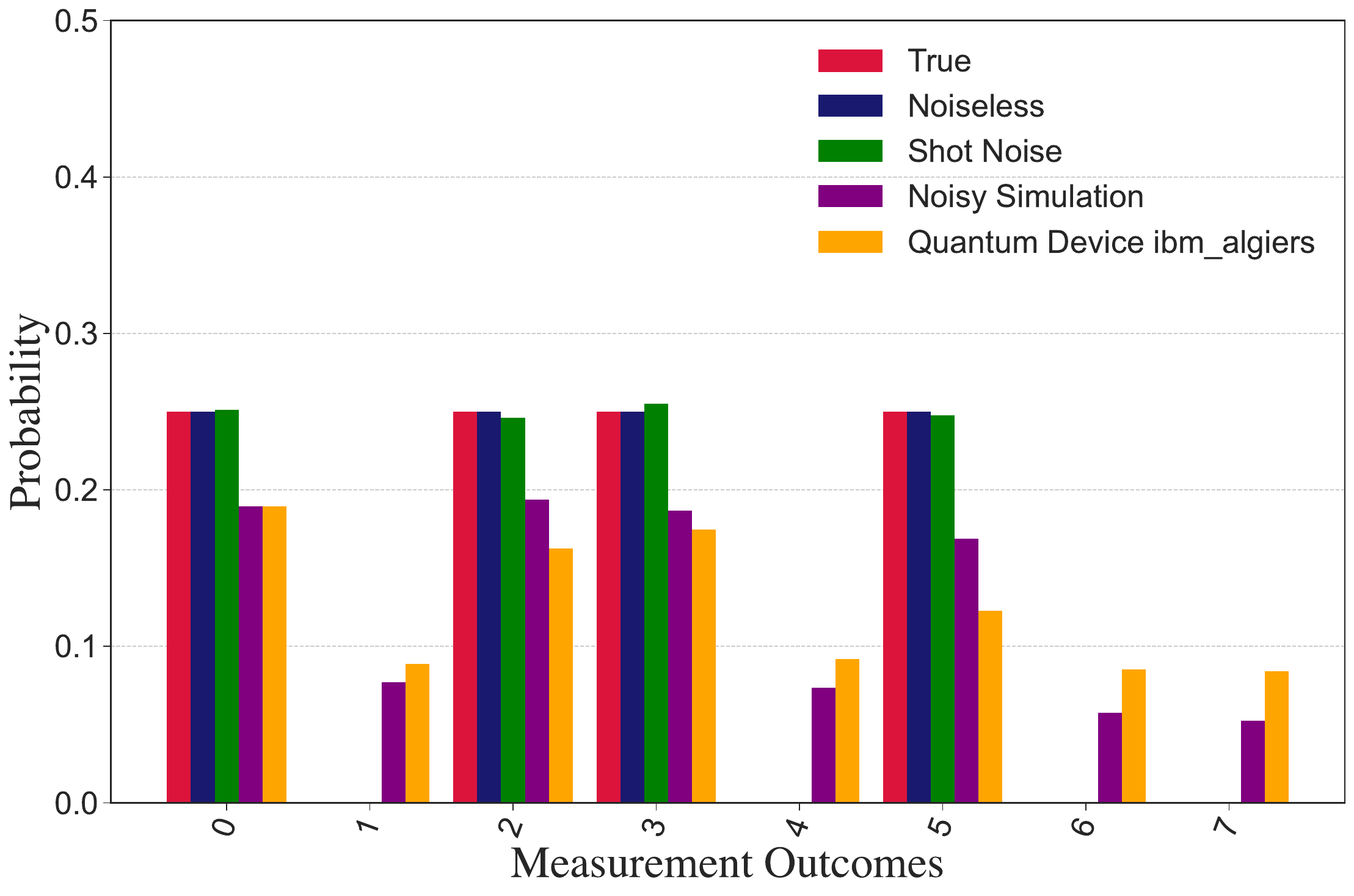}
    \end{subfigure}
    \hfill
    \begin{subfigure}[b]{0.49\textwidth}
    \centering
    \includegraphics[width=1.0\textwidth]{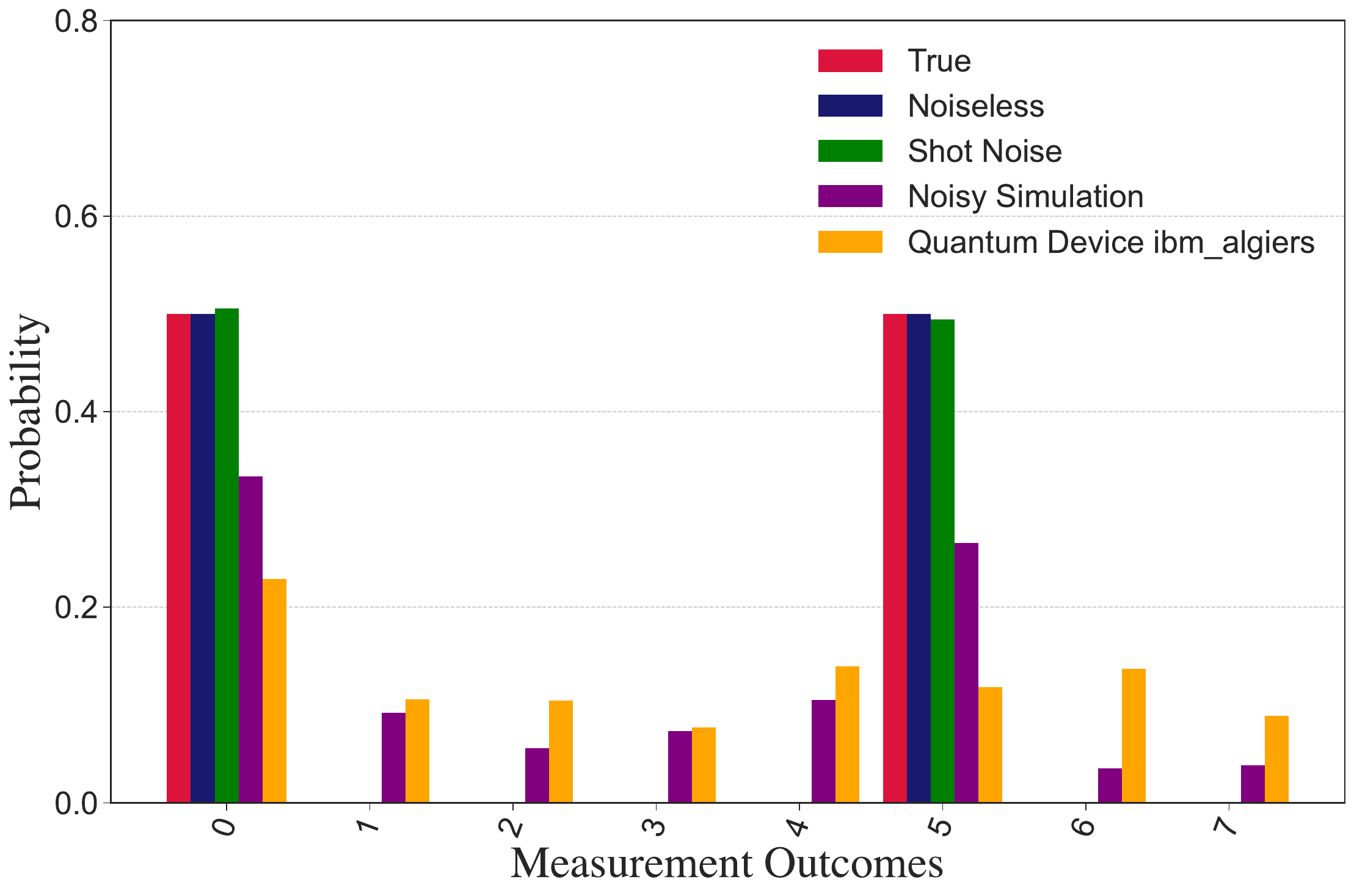}
    \end{subfigure}
    \hfill
    \begin{subfigure}[b]{0.49\textwidth}
    \centering
    \includegraphics[width=1.0\textwidth]{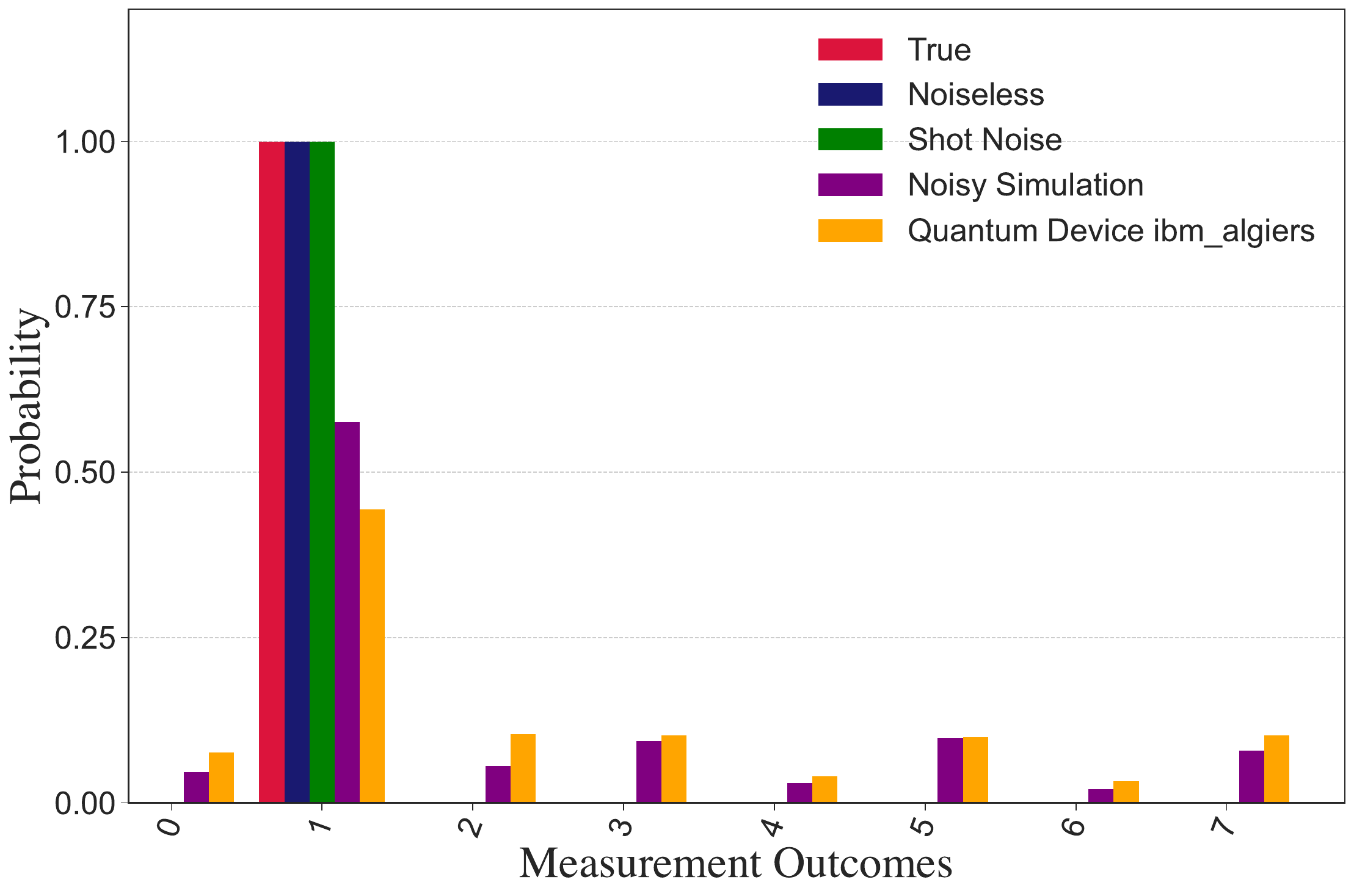}
    \end{subfigure}
    \hfill
    \begin{subfigure}[b]{0.49\textwidth}
    \centering
    \includegraphics[width=1.0\textwidth]{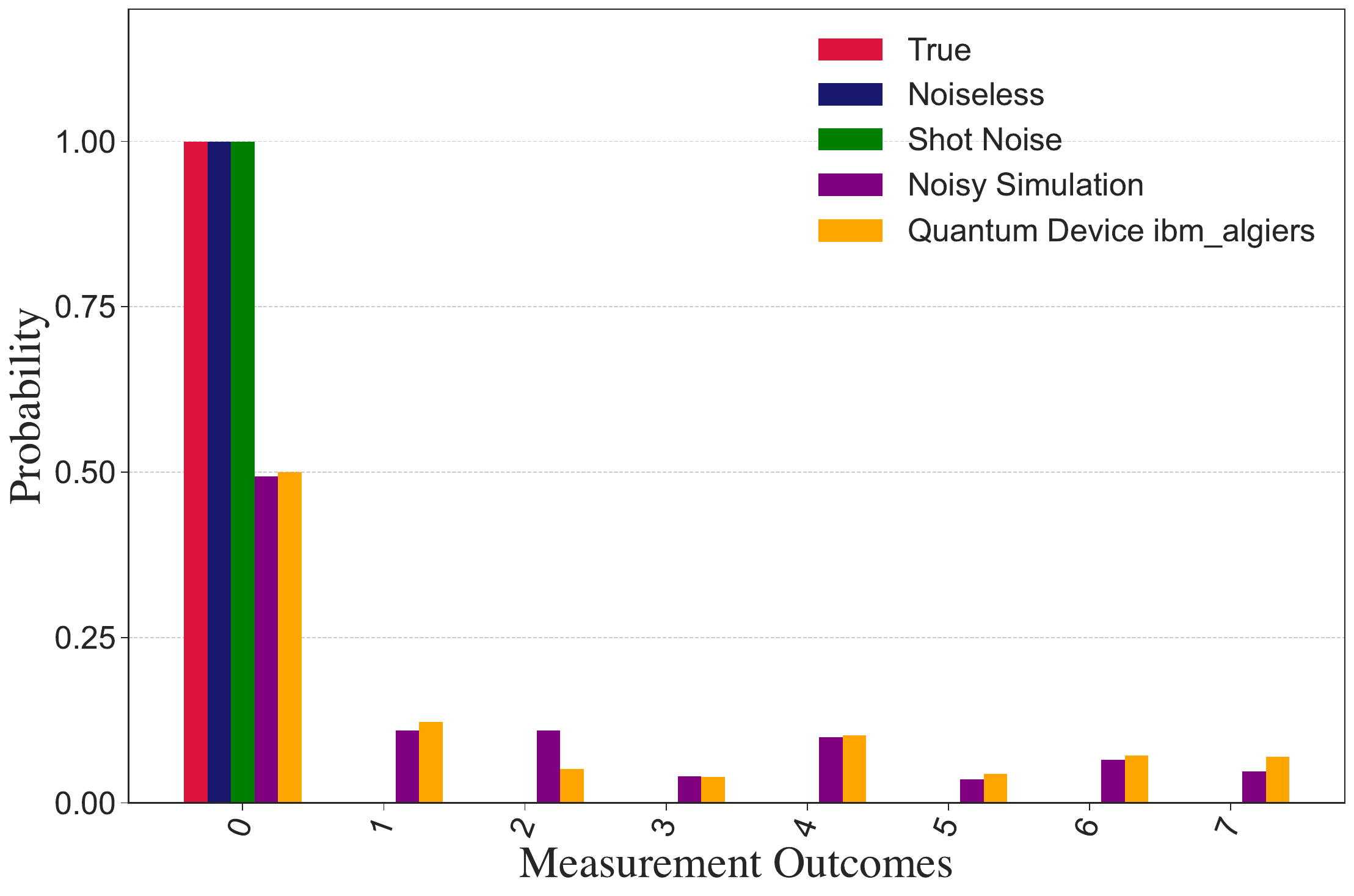}
    \end{subfigure}
    \captionsetup{justification=centering}
    \caption{Distribution of counts obtained for the input state $\ket{\psi} = \ket{\operatorname{GHZ}}\ket{\phi^+}$ (top left), the input state $\ket{\psi} = \frac{1}{\sqrt{2}}\left( \ket{00000} + \ket{11111}\right)$ (top right),  the input state $\ket{\operatorname{W}_5} = \frac{1}{\sqrt{5}}\left( \ket{00001} + \ket{00010} + \ket{00100} + \ket{01000} + \ket{10000}\right)$ (bottom left), and a randomly generated five-qubit input state (bottom right).}
\end{figure*}

\begin{figure*}[htbp]
    \begin{subfigure}[b]{0.49\textwidth}
    \centering
    \includegraphics[width=1.0\textwidth]{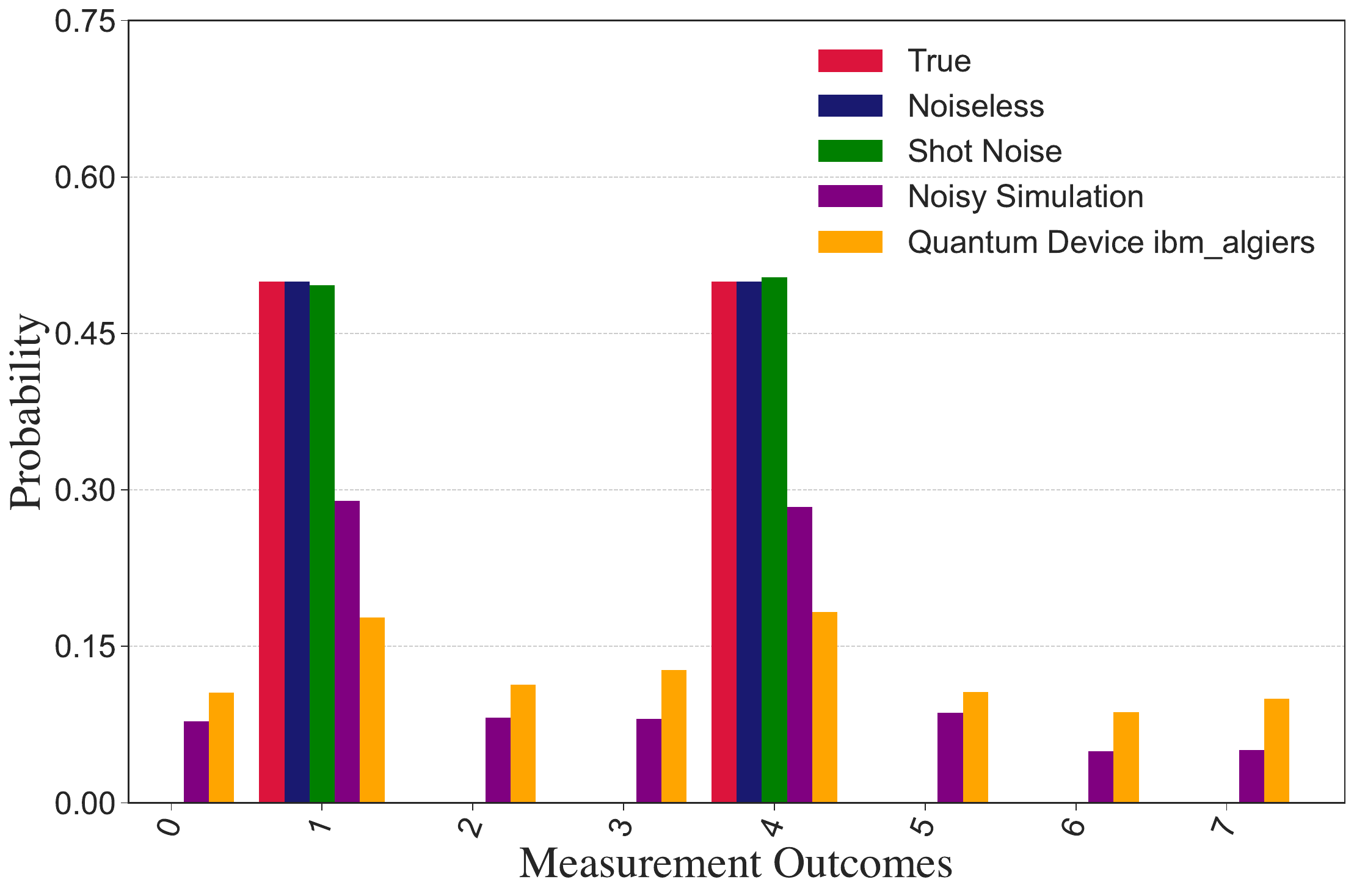}
    \end{subfigure}
    \hfill
    \begin{subfigure}[b]{0.49\textwidth}
    \centering
    \includegraphics[width=1.0\textwidth]{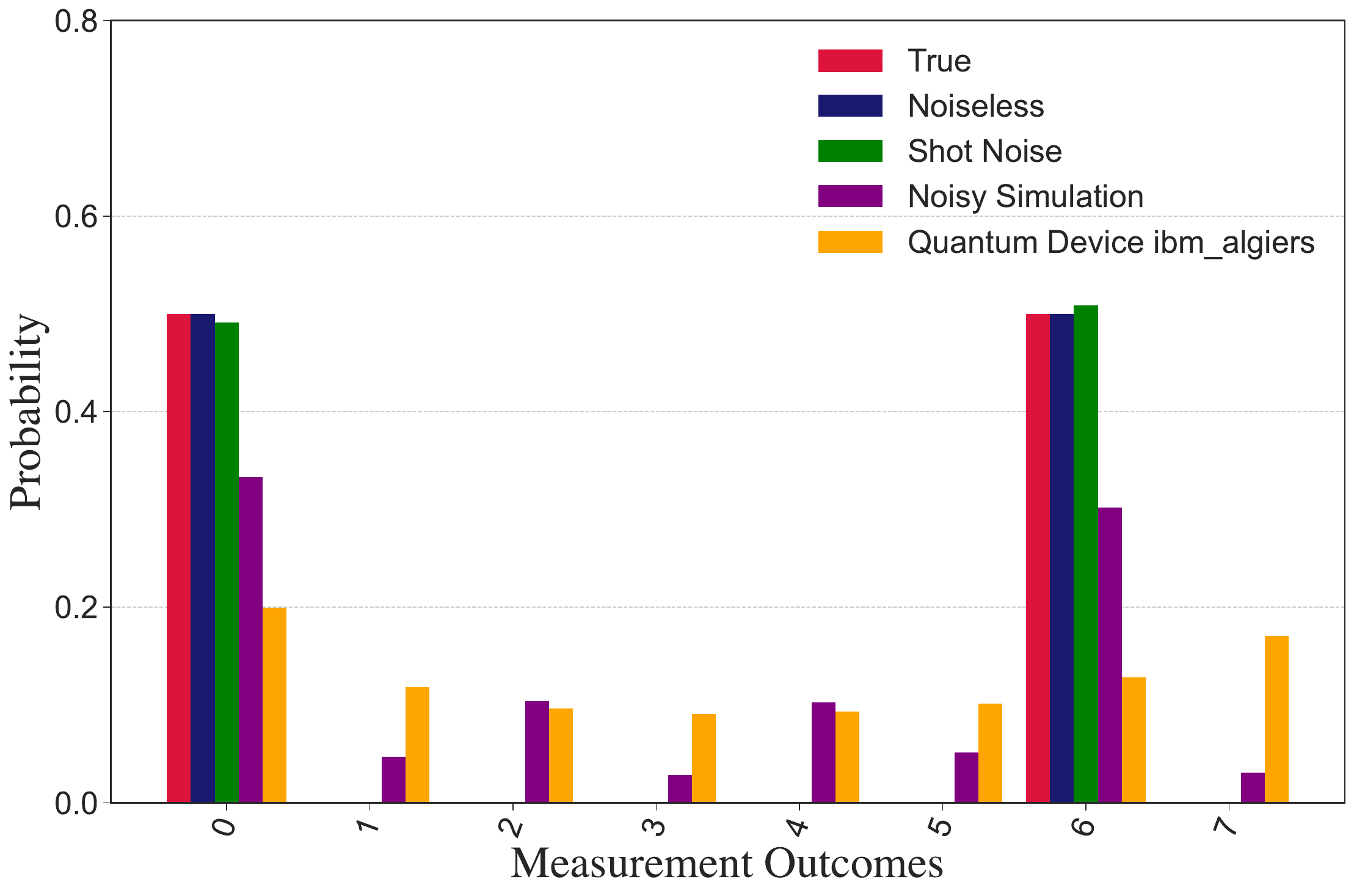}
    \end{subfigure}
    \hfill
    \begin{subfigure}[b]{0.49\textwidth}
    \centering
    \includegraphics[width=1.0\textwidth]{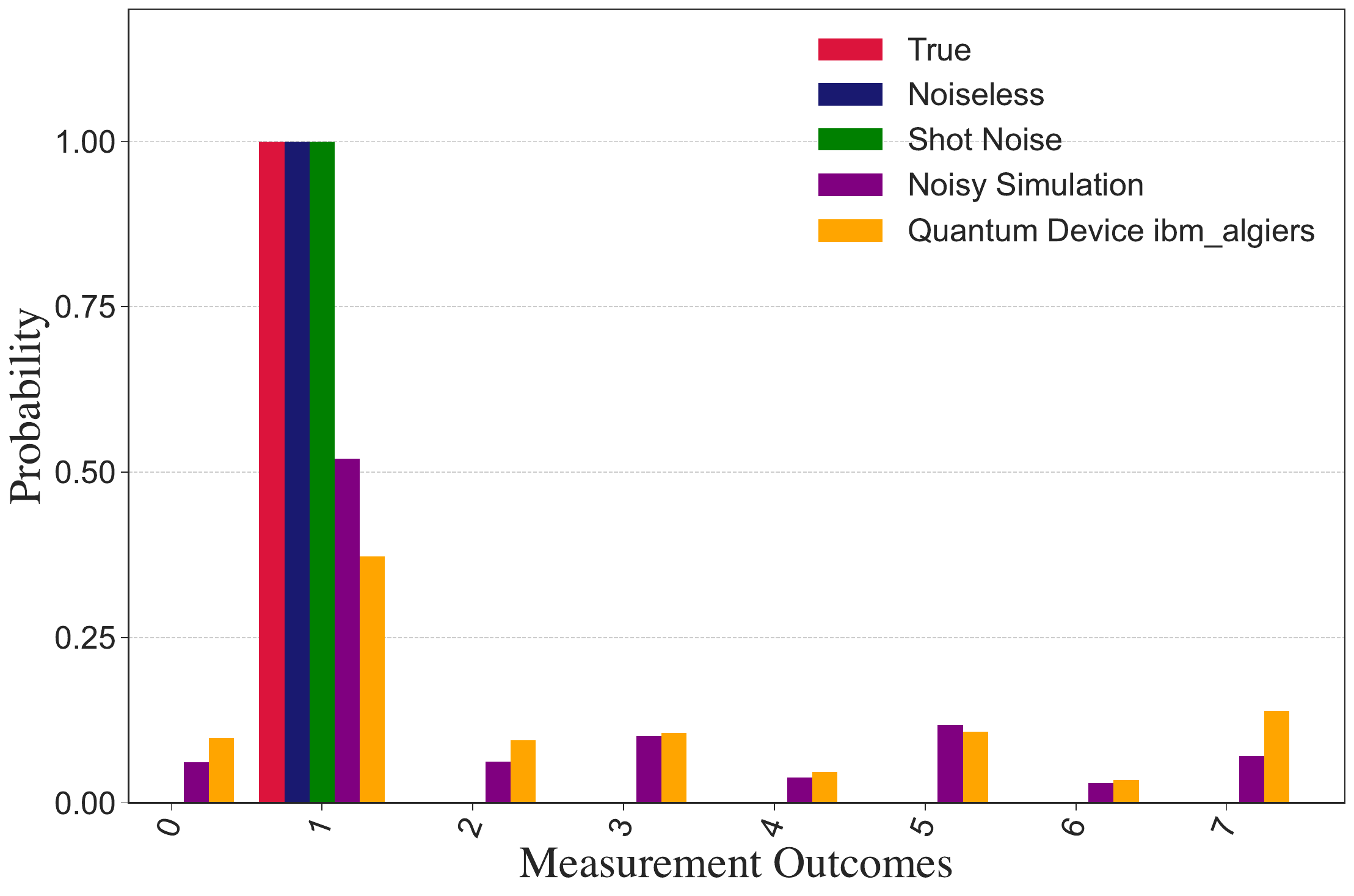}
    \end{subfigure}
    \hfill
    \begin{subfigure}[b]{0.49\textwidth}
    \centering
    \includegraphics[width=1.0\textwidth]{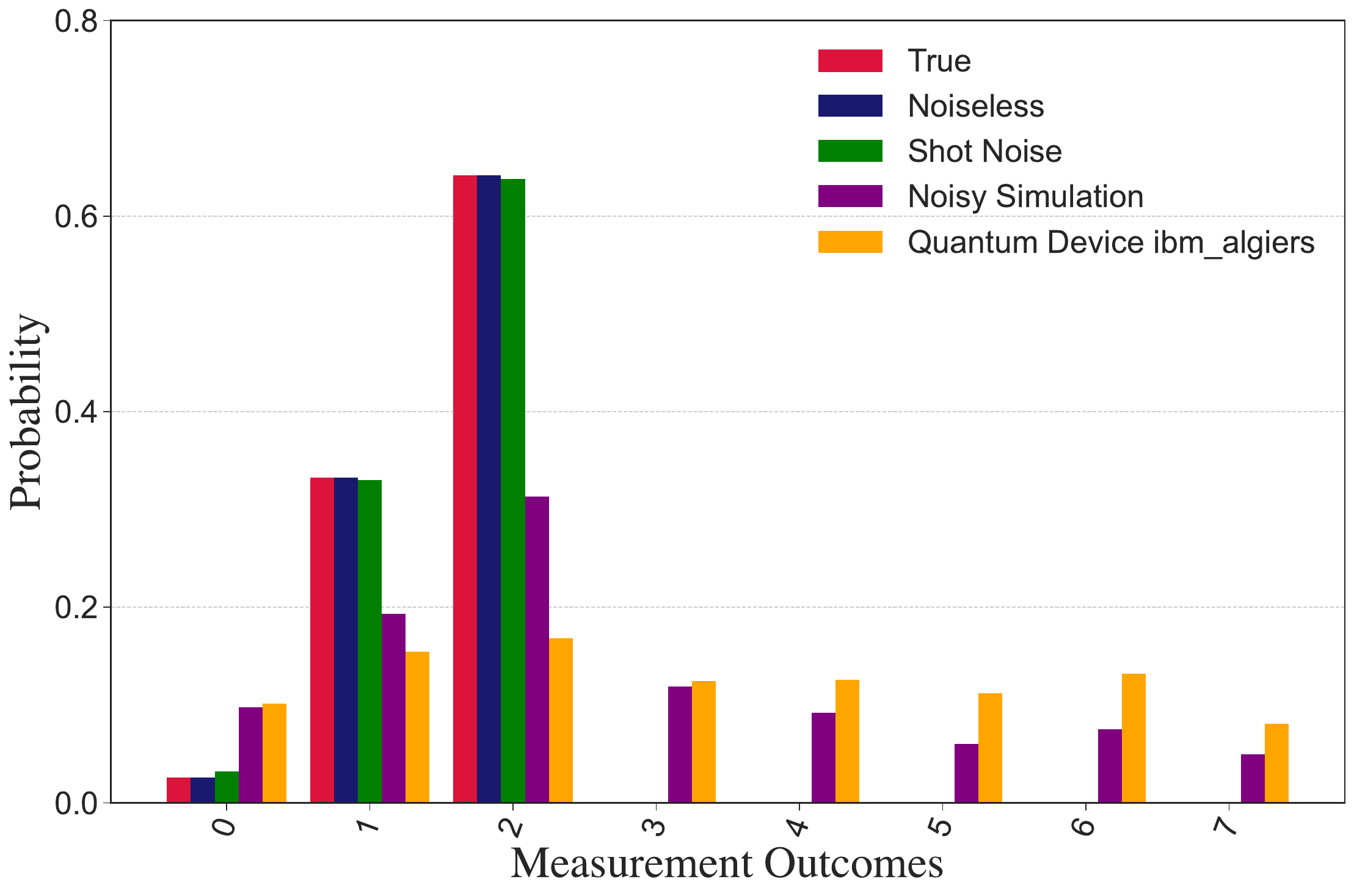}
    \end{subfigure}
    \captionsetup{justification=centering}
    \caption{Distribution of counts obtained for the input state $\ket{\psi} = \ket{\operatorname{GHZ}}\ket{\operatorname{W}}$ (top left), the input state $\ket{\psi} = \frac{1}{\sqrt{2}}\left( \ket{000000} + \ket{111111}\right)$ (top right),  the input state $\ket{\operatorname{W}_6} = \frac{1}{\sqrt{6}}( \ket{000001} + \ket{000010} + \ket{000100} + \ket{001000} + \ket{010000} + \ket{100000})$ (bottom left), and  a randomly generated six-qubit input state (bottom right).}
    \label{fig:last-example}
\end{figure*}

\section{Discussion and Conclusion}

In this work, we proposed a set of coherent quantum Hamming weight measurement circuits. The depth-optimal version has depth $\mathcal{O}(\log n)$ and requires $n$ control qubits, thus having a better depth scaling than the best previously known algorithm. If the figure of merit is the number of control qubits, the width-optimal version requires only $\mathcal{O}(\log n)$ qubits.

Future directions involve proving the optimality of the algorithm and its co-design with various available quantum computational architectures.
% \mmw{is the following still needed? It seems we need to do a full read of the paper again to check for unaddressed issues like this.} Let us note here that it can be implemented on a two-dimensional architecture with nearest neighbor interactions, in a manner similar to that outlined in~\cite{QKW24}.
Another future direction is to investigate if the optimal depth scaling is retained when restricted to linear nearest neighbour architectures.

\begin{acknowledgments}

We thank Yingkai Ouyang for helpful comments about using the coherent Hamming weight measurement for quantum error detection and correction. 
SR thanks Aby Philip and Vishal Singh for helpful and illuminating discussions. SR and MMW acknowledge support from the School of Electrical and Computer Engineering at Cornell University, from the National Science Foundation under Grant No.~2315398, and from AFRL under agreement no.~FA8750-23-2-0031. MLL acknowledges support from the DoD SMART Scholarship.

This material is based on research sponsored by Air Force Research Laboratory under agreement number FA8750-23-2-0031. The U.S.~Government is authorized to reproduce and distribute reprints for Governmental purposes notwithstanding any copyright notation thereon. The views and
conclusions contained herein are those of the authors and should not be interpreted as necessarily representing the official policies or endorsements, either expressed or implied,
of Air Force Research Laboratory or the U.S.~Government.
\end{acknowledgments}

\bibliography{main}

\end{document}